\documentclass[11pt]{amsart}

\usepackage{amsmath}
\usepackage{amsthm}
\usepackage{amssymb}
\usepackage{fullpage}
\usepackage{mathtools}
\usepackage{color}
\usepackage{comment}

\newtheorem{Thm}{Theorem}[section]
\newtheorem{thm}[Thm]{Theorem}
\newtheorem{prop}[Thm]{Proposition}
\newtheorem{coro}[Thm]{Corollary}

\newtheorem{lemma}[Thm]{Lemma}

\theoremstyle{definition}
\newtheorem{defi}[Thm]{Definition}

\newtheorem{exam}[Thm]{Example}

\newcommand{\F}{\mathbb{F}}

\newcommand{\N}{\mathbb{N}}

\newcommand{\supp}{\operatorname{supp}}
\newcommand{\Cs}{\mathcal{C}}
\newcommand{\Hs}{\mathcal{H}}
\newcommand{\RM}{\mathcal{RM}}
\newcommand{\Ps}{\mathcal{P}}

\newcommand{\mat}[1]{\left(\begin{matrix}#1\end{matrix} \right)}

\pagenumbering{arabic}

\title{Batch Codes from Hamming and Reed-M{\"u}ller Codes}
\author{Travis Baumbaugh}
\address{Department of Mathematical Sciences, Clemson University,
  Clemson, SC 29634}
\email{tbaumba@g.clemson.edu}

\author{Yariana Diaz} 
\address{Department of Mathematics \& Statistics, Amherst College,
  Amherst, MA 01002
}
\email{ydiaz18@amherst.edu}

\author{Sophia Friesenhahn}
\address{Department of Mathematics, Willamette University, Salem, OR 97301}
\email{sophmf@gmail.com}

\author{Felice Manganiello}
\address{Department of Mathematical Sciences, Clemson University,
  Clemson, SC 29634}
\email{manganm@clemson.edu}

\author{Alexander Vetter}
\address{Department of Mathematics \& Statistics, Villanova
  University, Villanova, PA 19085}
\email{avetter@villanova.edu}

\linespread{1.2}

\begin{document}
\maketitle

\pagestyle{plain}

\begin{abstract}
\noindent Batch codes, introduced by Ishai \textit{et al.} encode a
string $x \in \Sigma^{k}$ into an $m$-tuple of strings, called
buckets. In this paper we consider multiset batch codes wherein a set
of $t$-users wish to access one bit of information each from the
original string. 
We introduce a concept of optimal batch codes.  We first show that
binary Hamming codes are optimal batch codes. The main body of this
work provides batch properties of Reed-M\"uller codes. We look at
locality and availability properties of first order Reed-M{\"u}ller
codes over any finite field. We then show that binary first order
Reed-M{\"u}ller codes are optimal batch codes when the number of users
is 4 and generalize our study to the family of binary Reed-M{\"u}ller
codes which have order less than half their length.

\end{abstract}

\section{Introduction}\label{s1}

\indent Consider the situation where a certain amount of data, such as information to be downloaded, is distributed over a number of devices. We could have multiple users who wish to download this data. In order to reduce wait time, we look at locally repairable codes with availability as noted in \cite{dimakis}. A locally repairable code, with locality $r$ and availability $\delta$, provides us the opportunity to reconstruct a particular bit of data using $\delta$ disjoint sets of size at most $r$ \cite{skachek}. When we want to reconstruct a certain bit of information, this is the same as a Private Information Retrieval (PIR) code. However, we wish to examine a scenario where we reconstruct not necessarily distinct bits of information.

\indent A possible answer to the more complex scenario seems to be a family of codes called batch codes, introduced by Ishai \textit{et al.} in \cite{ucla}. Batch codes were originally studied as a schematic for distributing data across multiple devices and minimizing the load on each device and total amount of storage. We study $(n,k,t,m,\tau)$ batch codes, where $n$ is the code length, $k$ is the dimension of the code, $t$ is the number of bits we wish to retrieve, $m$ is the number of buckets, and $\tau$ is the maximum number of bits used from each bucket for any reconstruction of $t$ bits. In this paper we seek to minimize the number of devices in the system and the load on each device while maximizing the amount of reconstructed data. In other words, we want to minimize $m\tau$ while maximizing $t$.

 In Section \ref{s2}, we formally introduce batch codes and
summarize results from previous work on batch codes. We then introduce
the concepts of locality and availability of a code. We conclude the
section by introducing a concept of optimal batch codes.

 After the background, we study the batch properties of binary
Hamming codes and Reed-M{\"u}ller codes. Section \ref{s3} focuses on
batch properties of binary Hamming codes. We show that Hamming codes
are optimal $(2^{s-1}, 2^{s}-1-s, 2, m, \tau)$ batch codes for
$m,\tau\in\N$ such that $m\tau = 2^{s-1}$.

 Section \ref{s4} is the main body of this work and provides
batch properties of Reed-M\"uller codes. We first study the induced batch properties
of a code $\Cs$ given that $\Cs^{\perp}$ is of a $(u \mid u+v)$-code
construction with determined batch properties. 
In Section \ref{s43} we study the locality and availability properties
of first order Reed-M{\"u}ller codes over any finite field. We find
that the locality of $\RM_q(1,\mu)$ is $2$ when $q\ne2$ and
$3$ when the $q=2$. Furthermore, we also show that its availability  is
$\left\lfloor\frac{q^{\mu}-1}{2}\right\rfloor$ when $q\ne 2$, whereas
when $q=2$, the availability is
$\frac{2^\mu-1}3$ if $\mu$ is even and at least $\frac{2^\mu-4}4$ otherwise.
 In Section
\ref{s432} we show that binary first order Reed-M{\"u}ller codes are
optimal batch codes for $t=4$.  We first
look at the specific $\RM(1,4)$ case and achieve parameters
$(16,5,4,m,\tau)$ such that $m\tau = 10$. We then prove a general
result that any Reed-M{\"u}ller code with $\rho=1$ and $\mu\geq 4$ has
batch properties $(2^{\mu},\mu+1,4,m,\tau)$ for any
$m,\tau\in\mathbb{N}$ such that $m\tau=10$.

\indent Finally, in Section \ref{s44} we generalize our study of Reed-M{\"u}ller codes and look at properties of $\RM(\rho,\mu)$ for all values of $\rho$ and conclude our study by presenting batch properties $(2^{\mu},k,4,m,\tau)$ such that $m\tau = 10\cdot2^{2\rho -2}$ for $\RM(\rho',\mu)$ where $\mu\in\{2\rho+2,2\rho+3\}$ and $\rho'\le\rho$.

\section{Background}\label{s2}
\indent In 2004, Ishai \textit{et al.} \cite{ucla} introduced the following definition of batch codes:
\begin{defi}\label{batch1}
An $(n,k,t,m,\tau)$ batch code over an alphabet $\Sigma$ encodes a string $x\in\Sigma^{k}$ into an $m$-tuple of strings, called buckets, of total length $n$ such that for each $t$-tuple of distinct indices, $i_{1},\dots,i_{t}\in[k]$, the entries $x_{i_{1}},\dots,x_{i_{t}}$ can be decoded by reading at most $\tau$ symbols from each bucket. 
\end{defi}
\indent We can view the buckets as servers and the symbols used from each bucket as the maximal load on each server. In the above scenario, a single user is trying to reconstruct $t$ bits of information. This definition naturally leads to the concept of multiset batch codes which have nearly the same definition as above, but the indices $i_{1},\dots,i_{k}\in[k]$ are not necessarily distinct. This means we have $t$ users who each wish to reconstruct a single element. This definition in turn relates to private information retrieval (PIR) codes which are similar to batch codes but instead look to reconstruct the same bit of information $t$ times. Another notable type of batch code defined in \cite{ucla} is a primitive multiset batch code where the number of buckets is $m=n$.

The following are useful lemmas proven in \cite{ucla}:
\begin{lemma}\label{lemucla}
An $(n,k,t,m,\tau)$ batch code for any $\tau$ implies an $(n\tau,k,t,m\tau,1)$ batch code.
\end{lemma}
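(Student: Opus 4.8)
The plan is to build the claimed $(n\tau, k, t, m\tau, 1)$ batch code directly from a given $(n,k,t,m,\tau)$ batch code by replicating buckets. Label the buckets of the original code $B_1, \dots, B_m$ and let $s_j$ be the length of $B_j$, so that $\sum_{j=1}^m s_j = n$. The construction I would use replaces each bucket $B_j$ by $\tau$ identical copies $B_j^{(1)}, \dots, B_j^{(\tau)}$, each storing exactly the symbols of $B_j$. This re-encodes the same message $x \in \Sigma^k$ into $m\tau$ buckets of total length $\tau \sum_{j=1}^m s_j = n\tau$, so the length and bucket-count parameters come out as claimed, while the dimension $k$ is unchanged since no information is added or discarded.

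It then remains to verify the batch property with load parameter $1$. Fix any $t$-tuple of indices $i_1, \dots, i_t \in [k]$. By the hypothesis that the original code is an $(n,k,t,m,\tau)$ batch code, there is a decoding procedure recovering $x_{i_1}, \dots, x_{i_t}$ that reads at most $\tau$ symbols from each bucket $B_j$. I would retain exactly these symbol-reads but redistribute them across copies: for each $j$, the at most $\tau$ positions read from $B_j$ are assigned to distinct copies among $B_j^{(1)}, \dots, B_j^{(\tau)}$, reading the relevant symbol from each assigned copy. Because every copy stores the same data as $B_j$, the recovered values agree with the original decoding, so the same $t$ entries are reconstructed.

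The only step needing care is the redistribution, and it is settled by a direct pigeonhole count: for fixed $j$ there are at most $\tau$ symbols to read and exactly $\tau$ copies available, so at most one read lands in each copy. Hence each of the $m\tau$ buckets contributes at most one symbol to the reconstruction, which is precisely the condition $\tau=1$. I do not expect a genuine obstacle here, as the argument is essentially bookkeeping; the two points to state cleanly are that identical copies preserve decodability and that the pigeonhole bound controls the per-bucket load.
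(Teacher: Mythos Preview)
Your argument is correct: replicating each bucket $\tau$ times and distributing the at most $\tau$ reads among the copies yields an $(n\tau,k,t,m\tau,1)$ batch code, exactly as you describe. The paper does not actually supply its own proof of this lemma but attributes it to Ishai \textit{et al.}~\cite{ucla}; your replication-and-pigeonhole construction is precisely the standard argument used there, so there is nothing to compare.
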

\begin{lemma}\label{lemucla2}
An $(n,k,t,m,1)$ batch code implies an $(n,k,t,\lceil\frac{m}{\tau}\rceil,\tau)$ batch code.
\end{lemma}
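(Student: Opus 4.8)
The plan is to prove Lemma~\ref{lemucla2} by \emph{merging} the buckets of the given code into fewer, larger buckets, trading bucket count against per-bucket load; the encoding map itself is never altered. First I would fix an $(n,k,t,m,1)$ batch code, so that an arbitrary string $x\in\Sigma^{k}$ is encoded into buckets $B_{1},\dots,B_{m}$ of total length $n$ with the property that every $t$-tuple of (not necessarily distinct) requested indices can be decoded by reading \emph{at most one} symbol from each $B_{j}$.

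Next I would partition the index set $\{1,\dots,m\}$ into $\lceil\frac{m}{\tau}\rceil$ consecutive blocks, each of size at most $\tau$, where only the last block may be strictly smaller. For each block I would define a new bucket as the concatenation of the original buckets whose indices lie in that block. This produces $\lceil\frac{m}{\tau}\rceil$ new buckets, and since concatenation neither creates nor destroys symbols, their total length is still $n$. The dimension $k$ and the number of requested bits $t$ are unchanged, because we have only relabelled which symbols belong to which bucket rather than re-encoding $x$.

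Finally I would verify the load bound. Given any $t$-tuple of indices, the original decoder reads at most one symbol from each $B_{j}$. Since a new bucket is built from at most $\tau$ of the original buckets, the number of symbols the decoder reads from any single new bucket is at most $\tau$. Hence the merged scheme satisfies the required constraints and is an $(n,k,t,\lceil\frac{m}{\tau}\rceil,\tau)$ batch code, as claimed.

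The main difficulty here is bookkeeping rather than mathematical depth: the only point needing care is the ceiling, which appears precisely because $\tau$ need not divide $m$. The one short block of size smaller than $\tau$ causes no trouble, since the bound is stated as ``at most $\tau$'' and a shorter block only lowers the realized load. I would also note in passing that this lemma is the natural partial converse to Lemma~\ref{lemucla}, which splits a single high-load bucket into several load-one buckets.
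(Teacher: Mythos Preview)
Your argument is correct and is the standard bucket-merging proof of this fact. Note, however, that the paper does not supply its own proof of Lemma~\ref{lemucla2}: it simply quotes the statement from Ishai \textit{et al.}~\cite{ucla}, so there is no in-paper proof to compare against. Your write-up would serve perfectly well as a self-contained justification; the only minor comment is that your closing remark about Lemma~\ref{lemucla} being a ``partial converse'' is slightly off, since Lemma~\ref{lemucla} replicates each symbol $\tau$ times (increasing $n$ to $n\tau$) rather than splitting existing buckets, so the two constructions are not literally inverse to one another.
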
 
\indent Much of the related research involves primitive multiset batch codes with a systematic generator matrix. In \cite{ucla}, the authors give results for some multiset batch codes using subcube codes and Reed-M{\"u}ller codes. They use a systematic generator matrix, which often allows for better parameters. Their goal was to maximize the efficiency of the code for a fixed number of queries $t$. The focus of research on batch codes then shifted to combinatorial batch codes. These were first introduced by \cite{paterson}. They are replication based codes using various combinatorial objects that allow for efficient decoding procedures. We do not consider combinatorial batch codes but some relevant results can be found in \cite{paterson}, \cite{ruj}, \cite{tuza}, and \cite{gal}. \\
\indent Next, the focus of research turned to linear batch codes, which use classical error-correcting codes. The following useful results are proven in \cite{lipmaa}:
\begin{thm}\label{thma}
Let $\Cs$ be an $[n,k,t,n,1]$ linear batch code over $\mathbb{F}_{2}$ with generator $G$. Then, $G$ is a generator matrix of the classical error-correcting $[n,k,d]_{2}$ linear code where $d\ge t$. 
\end{thm}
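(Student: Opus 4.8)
The plan is to argue by contraposition on codeword weights, exploiting the fact that the requests may repeat an index (the multiset feature). The crucial first step is to translate the batch condition into linear-algebra relations among the columns of $G$. Since $\Cs$ is linear, encoding $x \in \mathbb{F}_2^k$ produces $c = xG$, and because $m = n$ each bucket holds a single coordinate $c_\ell = x \cdot G_\ell$, where $G_\ell$ is the $\ell$-th column of $G$. A user who reconstructs $x_i$ by reading the coordinates in a set $S$ and forming their $\mathbb{F}_2$-sum obtains $\sum_{\ell \in S} c_\ell = x \cdot \sum_{\ell \in S} G_\ell$; for this to equal $x_i = x \cdot e_i$ for \emph{every} message $x$, the recovery set must satisfy $\sum_{\ell \in S} G_\ell = e_i$. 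I would record this characterization, and observe that $\tau = 1$ forces the recovery sets serving distinct users to be pairwise disjoint, since each coordinate may be read at most once in total.

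Next I would set up the contradiction. Suppose $d < t$, so $\Cs$ contains a nonzero codeword $c = xG$ whose support $T = \supp(c)$ has size $|T| = \operatorname{wt}(c) \le t - 1$. Since $c \neq 0$ and $G$ has full row rank, $x \neq 0$, so we may fix an index $i$ with $x_i = 1$. Because $\Cs$ is a multiset batch code, the request consisting of the single index $i$ repeated $t$ times is admissible, and the batch property supplies pairwise disjoint recovery sets $S_1, \dots, S_t$, each satisfying $\sum_{\ell \in S_j} G_\ell = e_i$.

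The heart of the argument is then a parity count. For our fixed codeword $c$, which is the all-ones vector on $T$ and zero elsewhere, the decoding identity gives $\sum_{\ell \in S_j} c_\ell = x_i = 1$ for each $j$; but $\sum_{\ell \in S_j} c_\ell = |S_j \cap T| \bmod 2$, so each intersection $S_j \cap T$ has odd cardinality and is in particular nonempty. As the $S_j$ are pairwise disjoint, the sets $S_1 \cap T, \dots, S_t \cap T$ are $t$ pairwise disjoint nonempty subsets of $T$, whence $|T| \ge t$. This contradicts $|T| \le t - 1$, so every nonzero codeword has weight at least $t$, i.e. $d \ge t$.

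The step I expect to require the most care is the first one: justifying that reconstruction in a linear batch code may be taken to be the $\mathbb{F}_2$-sum over a recovery set $S$ with $\sum_{\ell \in S} G_\ell = e_i$, and that $\tau = 1$ corresponds exactly to disjointness of the $S_j$. Once this dictionary between the batch-code language and the column relations of $G$ is in place, the weight bound follows from the elementary parity and disjointness count above. The only genuinely essential hypothesis beyond linearity is that repeated indices are permitted, which is precisely what lets us demand $t$ disjoint recovery sets for a single $e_i$.
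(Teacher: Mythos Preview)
The paper does not contain its own proof of this statement; Theorem~\ref{thma} is quoted without proof from \cite{lipmaa}. Your proposal is correct and is, in substance, the standard argument: a recovery set for $x_i$ in a primitive linear batch code over $\mathbb{F}_2$ is a subset $S$ of coordinate positions with $\sum_{\ell\in S}G_\ell=e_i$; the hypothesis $m=n$, $\tau=1$ forces the $t$ recovery sets serving a single request to be pairwise disjoint; and evaluating a putative codeword of weight at most $t-1$ on each of these sets produces $t$ pairwise disjoint nonempty subsets of its support, a contradiction. Your flagged caveat---that the argument genuinely needs the multiset hypothesis in order to request the same index $t$ times---is exactly the right point of care, and the theorem as stated in \cite{lipmaa} is indeed for primitive multiset batch codes.
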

\begin{thm}\label{thmb}
Let $\Cs_{1}$ be an $[n_{1},k,t_{1},n,1]_{q}$ linear batch code and $\Cs_{2}$ be an $[n_{2},k,t_{2},n_{2},1]_{q}$ linear batch code. Then, there exists an $[n_{1}+n_{2},k,t_{1}+t_{2},n_{1}+n_{2},1]_{q}$ linear batch code.
\end{thm}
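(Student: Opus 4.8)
The plan is to construct the claimed code by juxtaposing generator matrices, so that the two summand codes share the same information vector. Fix generator matrices $G_1\in\F_q^{k\times n_1}$ and $G_2\in\F_q^{k\times n_2}$ for $\Cs_1$ and $\Cs_2$, and form the $k\times(n_1+n_2)$ matrix $G=[\,G_1\mid G_2\,]$ obtained by appending the columns of $G_2$ to those of $G_1$. Let $\Cs$ be the primitive linear code generated by $G$, with each of the $n_1+n_2$ coordinates serving as its own bucket. Because $G_1$ already has rank $k$, the matrix $G$ has rank $k$, so $\Cs$ has dimension $k$ and length $n_1+n_2$. The key structural observation is that for any message $x\in\F_q^k$ we have $xG=(xG_1,\,xG_2)$; hence the first $n_1$ coordinates of a codeword are precisely the $\Cs_1$-encoding of $x$, and the final $n_2$ coordinates are precisely the $\Cs_2$-encoding of the \emph{same} $x$.

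With this in hand, I would verify the batch property for $t_1+t_2$ queries. Given any multiset of indices $i_1,\dots,i_{t_1+t_2}\in[k]$, I split it arbitrarily into a sub-multiset of size $t_1$ and one of size $t_2$. Since $\Cs_1$ is a $t_1$-query batch code with $\tau=1$, the first $t_1$ symbols can be recovered using pairwise disjoint recovery sets contained in the coordinate block $[n_1]$; since $\Cs_2$ is a $t_2$-query batch code with $\tau=1$, the remaining $t_2$ symbols can be recovered using pairwise disjoint recovery sets contained in the block $\{n_1+1,\dots,n_1+n_2\}$. Both recoveries are valid precisely because each block is an encoding of the common message $x$, so each $x_{i_j}$ really is a function of the codeword symbols in its assigned block.

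Finally, I would combine the two families of recovery sets and address the only delicate point. Disjointness within each block is inherited from the batch properties of $\Cs_1$ and $\Cs_2$, while disjointness across the two families is automatic since the blocks $[n_1]$ and $\{n_1+1,\dots,n_1+n_2\}$ occupy disjoint coordinate ranges; hence all $t_1+t_2$ recovery sets are pairwise disjoint, no bucket is read more than once, and $\tau=1$ is maintained over the $m=n_1+n_2$ buckets, yielding exactly the asserted $[n_1+n_2,k,t_1+t_2,n_1+n_2,1]_q$ batch code. I do not expect a genuine obstacle: the construction is simply the column juxtaposition $[\,G_1\mid G_2\,]$. The one point demanding care is that the argument relies on both codes encoding the \emph{same} information vector $x$, so that an arbitrary assignment of each queried index to one of the two blocks is decodable; confirming that the $t_1+t_2$ queries may be split with complete freedom (repeated indices causing no trouble, since each $\Cs_i$ is itself a multiset batch code) is the last detail to verify.
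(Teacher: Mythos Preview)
Your proposal is correct and is precisely the standard juxtaposition argument: take $G=[\,G_1\mid G_2\,]$, split any $(t_1+t_2)$-multiset of queries into pieces of sizes $t_1$ and $t_2$, and serve them from the two disjoint coordinate blocks. Note, however, that the paper does not actually prove this theorem; it merely quotes it as a result from \cite{lipmaa}, so there is no in-paper proof to compare against. Your argument is exactly the one given in that reference.
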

\begin{thm}\label{thmc}
Let $\Cs_{1}$ be an $[n_{1},k_{1},t_{1},n_{1},1]_{q}$ linear batch code and $\Cs_{2}$ be an $[n_{2},k_{2},t_{2},n_{2},1]_{q}$ linear batch code. Then, there exists an $[n_{1}+n_{2},k_{1}+k_{2},min(t_{1},t_{2}),n_{1}+n_{2},1]_{q}$ linear batch code.
\end{thm}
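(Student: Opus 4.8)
The plan is to realize the desired code as the direct sum of $\Cs_1$ and $\Cs_2$. Concretely, if $G_1\in\F_q^{k_1\times n_1}$ and $G_2\in\F_q^{k_2\times n_2}$ are generator matrices of $\Cs_1$ and $\Cs_2$, I would take
\[
G=\mat{G_1 & 0 \\ 0 & G_2}\in\F_q^{(k_1+k_2)\times(n_1+n_2)},
\]
so that a message $x=(x^{(1)}\mid x^{(2)})$ with $x^{(1)}\in\F_q^{k_1}$ and $x^{(2)}\in\F_q^{k_2}$ is encoded as $(x^{(1)}G_1\mid x^{(2)}G_2)$. This partitions the $n_1+n_2$ buckets into a first block of $n_1$ buckets carrying a codeword of $\Cs_1$ that depends only on the first $k_1$ information symbols, and a second block of $n_2$ buckets carrying a codeword of $\Cs_2$ that depends only on the last $k_2$ information symbols. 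The claimed length $n_1+n_2$ and dimension $k_1+k_2$ are then immediate.

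Next I would verify the batch property for $t=\min(t_1,t_2)$. Given a multiset of requested indices $i_1,\dots,i_t\in[k_1+k_2]$, I would split them according to the block each index belongs to: say $a$ requests ask for symbols among the first $k_1$ coordinates and $b=t-a$ ask for symbols among the last $k_2$. Since $a\le t\le t_1$ and $b\le t\le t_2$, the first group is a legal request for the $t_1$-batch code $\Cs_1$ and the second a legal request for the $t_2$-batch code $\Cs_2$ (padding each group with arbitrary extra requests up to $t_1$, resp.\ $t_2$, if necessary, and discarding the unused recovery sets, using monotonicity of the batch property in the number of queries). Applying the batch property of $\Cs_1$ yields $a$ pairwise disjoint recovery sets inside the first block, and that of $\Cs_2$ yields $b$ pairwise disjoint recovery sets inside the second block.

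Finally I would combine the two families. Because the first family lives entirely in buckets $\{1,\dots,n_1\}$ and the second entirely in buckets $\{n_1+1,\dots,n_1+n_2\}$, every set from the first family is disjoint from every set from the second, so all $t=a+b$ recovery sets are pairwise disjoint and each bucket is read at most once, giving $\tau=1$. This exhibits $G$ as an $[n_1+n_2,k_1+k_2,\min(t_1,t_2),n_1+n_2,1]_q$ linear batch code.

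The point to be careful about is why the batch parameter is $\min(t_1,t_2)$ and not something larger: the worst case is when all $t$ requests fall into a single block, so that block alone must simultaneously serve up to $t$ of them, forcing $t\le t_i$ for each $i$ and hence $t\le\min(t_1,t_2)$, which is exactly what the construction achieves. This is in contrast to Theorem \ref{thmb}, where both codes encode the same message and any request may be served by either block, allowing the parameters to add. Beyond this observation no genuine obstacle arises, since disjointness across blocks is automatic and the monotonicity of the batch property absorbs any unbalanced split.
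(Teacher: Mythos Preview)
The paper does not actually contain a proof of this theorem; it is quoted, together with Theorems~\ref{thma} and~\ref{thmb}, as a result proved in \cite{lipmaa}. There is therefore no in-paper argument to compare against.

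That said, your proof is correct and is the standard one: the block-diagonal generator matrix realizes the direct sum $\Cs_1\oplus\Cs_2$, and any multiset of $t\le\min(t_1,t_2)$ requests splits into at most $t_1$ requests handled by $\Cs_1$ in the first block and at most $t_2$ requests handled by $\Cs_2$ in the second, with cross-block disjointness automatic. Your remark that $\min(t_1,t_2)$ is forced by the worst case where all requests land in a single block is exactly the right observation distinguishing this from Theorem~\ref{thmb}. This is precisely the argument one finds in the cited reference.
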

\indent Because of the vast amount of information on classical error-correcting codes, we use these as our central focus in this paper. As is often the case, studying the properties of the dual codes helps us find efficient batch codes.

Next, \cite{thomas} considers restricting the size of reconstruction sets. These are similar to codes with locality and availability:
\begin{defi}\label{loc}
A code $\Cs$ has locality $r\geq1$ if for any $c\in\Cs$, any entry in $c$ can be recovered by using at most $r$ other entries of $c$.
\end{defi}
\begin{defi}\label{avail}
A code $\Cs$ with locality $r\geq1$ has availability $\delta\geq1$ if for any $c\in\Cs$, any entry of $c$ can be recovered by using one of $\delta$ disjoint sets of at most $r$ other entries of $y$
\end{defi}
\indent In \cite{thomas}, linear multiset batch codes with restricted reconstruction sets were presented. The restriction on the size of reconstruction sets can be viewed as trying to minimize total data distribution. We restrict the size of our reconstruction sets to the locality of the code. By using this restriction, we find multiset batch codes with optimal data distribution. For cyclic codes, the locality can be derived from the following in \cite{hu16}:

\begin{lemma}\label{haminfo}
Let $\Cs$ be an $[n, k, d]$ cyclic code, and let $d'$ be the minimum distance of its dual code $\Cs^{\perp}$. Then, the code $\Cs$ has all symbol locality $d'-1$.
\end{lemma}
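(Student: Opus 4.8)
The plan is to exploit the parity-check structure of $\Cs$ together with the fact that the dual of a cyclic code is again cyclic. Recall that a word $h$ lies in $\Cs^{\perp}$ precisely when $\sum_{j=0}^{n-1} h_j c_j = 0$ for every $c\in\Cs$, so each such $h$ encodes a parity-check relation. If $h$ has support $S$ and $i\in S$, then since $h_i\neq 0$ is invertible in the underlying field we may solve
\[
c_i = -h_i^{-1}\sum_{j\in S\setminus\{i\}} h_j c_j,
\]
recovering $c_i$ from the $|S|-1$ coordinates indexed by $S\setminus\{i\}$. Thus every parity-check word of weight $w$ furnishes a recovery set of size $w-1$ for each coordinate in its support.

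First I would pick a minimum-weight codeword $h\in\Cs^{\perp}$, so that $h$ has weight $d'$ with support $S$ of size $d'$. By the observation above this already yields locality $d'-1$ for every coordinate lying in $S$. The remaining task is to show that \emph{every} one of the $n$ coordinates admits such a recovery set, i.e.\ that $\Cs$ has all-symbol locality $d'-1$ rather than merely locality at the coordinates hit by one fixed dual word.

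Here I would invoke cyclicity. Since $\Cs$ is cyclic, $\Cs^{\perp}$ is cyclic as well, so every cyclic shift of $h$ again lies in $\Cs^{\perp}$ and, being obtained by a coordinate permutation, has the same weight $d'$ with support a shift of $S$. Fixing any target coordinate $i$ and any $s\in S$, the cyclic shift of $h$ by $i-s$ (modulo $n$) is a weight-$d'$ word of $\Cs^{\perp}$ whose support contains $i$. Applying the recovery formula to this shifted parity-check word expresses $c_i$ in terms of $d'-1$ other coordinates of $c$. As $i$ was arbitrary, every coordinate has a recovery set of size $d'-1$, which is exactly all-symbol locality $d'-1$.

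The only delicate point is this uniformity across coordinates: for a non-cyclic code a fixed minimum-weight dual word need not touch every position, and some coordinates could require larger recovery sets. Cyclicity is precisely what removes the obstruction, since the orbit of a single minimum-weight word of $\Cs^{\perp}$ under cyclic shifts sweeps its support across all $n$ positions. Everything else reduces to the routine linear-algebra fact that a nonzero parity-check coefficient is invertible over a field, so I expect no further difficulty beyond making the shift bookkeeping precise.
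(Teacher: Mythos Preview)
Your argument is correct and matches the paper's reasoning. The paper does not give a self-contained proof of this lemma---it cites \cite{hu16} and remarks only that ``this relies on each entry being in the support of a minimal weight dual code word''; your use of the cyclicity of $\Cs^{\perp}$ to shift a fixed minimum-weight dual word onto any coordinate is precisely the mechanism behind that remark, and it is also the cyclic special case of the paper's subsequent Lemma~\ref{RMinfo}.
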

This relies on each entry being in the support of a minimal weight
dual code word. We generalize this lemma to the following one.

\begin{lemma}\label{RMinfo}
  Let $\Cs\subseteq \F_q^n$ be a linear code and let $d^\prime$ be the minimum
  distance of $\Cs^\perp$. If $\Cs^\perp$ is generated by its minimum
  weight codewords and
  \begin{equation}\label{indep}
    \bigcup_{\lambda\in \Cs^\perp}\supp(\lambda)=[n],
\end{equation}
then  $\Cs$ has all symbol locality $d'-1$. 
\end{lemma}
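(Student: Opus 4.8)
The plan is to use the standard fact that a parity-check relation coming from a dual codeword lets one recover any coordinate in its support from the remaining coordinates in that support, and then to observe that the two hypotheses together guarantee a \emph{minimum weight} such relation at every coordinate. Concretely, if $\lambda\in\Cs^\perp$ and $i\in\supp(\lambda)$, then for every $c\in\Cs$ the orthogonality $\sum_{j=1}^n\lambda_jc_j=0$ can be solved for $c_i$, giving
\begin{equation*}
  c_i=-\lambda_i^{-1}\sum_{j\in\supp(\lambda)\setminus\{i\}}\lambda_jc_j,
\end{equation*}
so $c_i$ is recovered from the $|\supp(\lambda)|-1$ entries indexed by $\supp(\lambda)\setminus\{i\}$ (the scalar $\lambda_i^{-1}$ exists since $\lambda_i\neq0$). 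If in addition $\lambda$ has minimum weight, then $|\supp(\lambda)|=d'$ and this recovery uses exactly $d'-1$ other entries. Thus the entire lemma reduces to showing that every coordinate $i\in[n]$ lies in the support of at least one minimum weight codeword of $\Cs^\perp$.

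This is where both hypotheses enter. First I would fix $i\in[n]$; by \eqref{indep} there is some $\mu\in\Cs^\perp$ with $\mu_i\neq0$. Since $\Cs^\perp$ is generated by its minimum weight codewords, I can write $\mu=\sum_k a_k\lambda^{(k)}$ with each $\lambda^{(k)}$ a minimum weight codeword and $a_k\in\F_q$. Comparing $i$-th coordinates yields $\mu_i=\sum_k a_k\lambda^{(k)}_i\neq0$, so at least one term is nonzero; that is, $i\in\supp(\lambda^{(k)})$ for some minimum weight codeword $\lambda^{(k)}$. Applying the displayed recovery relation to this $\lambda^{(k)}$ then expresses $c_i$ using at most $d'-1$ other entries.

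Since $i\in[n]$ was arbitrary, every coordinate admits such a recovery set of size at most $d'-1$, which by Definition \ref{loc} means $\Cs$ has all symbol locality $d'-1$, completing the argument. The only genuinely non-routine step is the reduction in the previous paragraph: neither hypothesis alone suffices, since \eqref{indep} by itself only guarantees that each coordinate lies in the support of some (possibly high weight) dual codeword, while the generation hypothesis is precisely what upgrades this to a minimum weight witness. I expect this coupling of the two hypotheses to be the main point to get right; everything else is the elementary parity-check manipulation already underlying Lemma \ref{haminfo}, of which this lemma is the announced generalization.
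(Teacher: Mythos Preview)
Your argument is correct and follows exactly the same route as the paper's proof: use \eqref{indep} together with the generation hypothesis to place every coordinate in the support of some minimum weight dual codeword, then invoke the standard parity-check recovery to obtain locality $d'-1$. The paper states these steps in a single terse paragraph, whereas you have spelled out the recovery formula and the linear-combination argument explicitly, but the underlying logic is identical.
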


\begin{proof}
  Condition \eqref{indep} implies that no coordinate of $\Cs$ is
  independent on the others. If the minimum weight codewords generate
  $\Cs^\perp$, then each coordinate of $\Cs$ is
  in the support of at least one minimum weight codeword of $\Cs^\perp$.  This
  implies the all symbol locality $d'-1$ of $\Cs$.
\end{proof}

Condition \eqref{indep} is a reasonable condition for a code. Without
it the code $\Cs$ would have non recoverable coordinates.
   
We give here a bound that relates the locality property of a
linear code with its batch properties.

\begin{lemma}\label{bound}
Let $\Cs$ be an $[n,k,t,m,\tau]$ linear batch code with minimal locality $r$. Then it holds that
\begin{equation}\label{mtaubound}
m\tau\geq(t-1)r +1.
\end{equation}
\end{lemma}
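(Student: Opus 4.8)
The plan is to exhibit a single adversarial request pattern that forces $m\tau$ to be large, and then to read off the bound from the structure of the recovery sets it requires. Since in our setting $\Cs$ is a multiset batch code, I would let all $t$ users request the \emph{same} information symbol, namely the one stored at a coordinate $i$ that witnesses the locality $r$ in the sense of Definition~\ref{loc}: a coordinate whose value cannot be reconstructed from fewer than $r$ of the remaining coordinates. Answering these $t$ requests amounts to choosing $t$ pairwise disjoint recovery sets $R_1,\dots,R_t\subseteq[n]$, each of which determines the requested symbol. The disjointness is the key modeling point: a symbol read from a bucket may be charged to only one user, since otherwise one could reconstruct a fixed symbol $t$ times for free and the batch notion would be vacuous.

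Next I would translate the read budget into an inequality. Because at most $\tau$ symbols are read from each of the $m$ buckets, and the $R_j$ are disjoint, the total number of reads satisfies $\sum_{j=1}^{t}|R_j|\le m\tau$. It therefore suffices to bound $\sum_j|R_j|$ from below, and this is the combinatorial heart of the argument, which I expect to be the only delicate step.

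For the lower bound, observe that at most one of the disjoint sets $R_j$ can contain the coordinate $i$ itself; that set may be as small as the singleton $\{i\}$, a direct read contributing $1$. Every remaining recovery set is disjoint from $\{i\}$, hence reconstructs the symbol at $i$ using only coordinates different from $i$, and so has at least $r$ elements by the choice of $i$. This gives $\sum_{j=1}^{t}|R_j|\ge 1+(t-1)r$; and the same estimate holds a fortiori if no $R_j$ uses coordinate $i$, since then every set has size at least $r\ge 1$ and $tr\ge (t-1)r+1$.

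Combining the two estimates yields $m\tau\ge\sum_{j}|R_j|\ge (t-1)r+1$, which is exactly \eqref{mtaubound}. The main obstacle, as anticipated, is justifying that the $t-1$ non-trivial recovery sets each cost at least $r$ reads; this is precisely where the definition of locality enters, through the careful selection of the witnessing coordinate $i$, together with the convention that recovery sets serving distinct requests are disjoint.
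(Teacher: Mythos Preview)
Your argument is correct and follows essentially the same route as the paper: pick a coordinate $i$ at which the locality $r$ is actually attained, issue the multiset query $(i,\dots,i)$, and observe that one recovery set may be the singleton $\{i\}$ while the remaining $t-1$ disjoint sets must each have size at least $r$, forcing $m\tau\ge (t-1)r+1$. You are slightly more careful than the paper in explicitly covering the case where no recovery set contains $i$, but the underlying idea is identical.
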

\begin{proof}

We consider such a code $\Cs$. If each entry has at least one reconstruction set with fewer than $r$ elements, then by the definition of locality, $\Cs$ has locality less than $r$, a contradiction to $r$ being the minimal locality. Therefore, there exists at least one entry for which all recovery sets are of size at least $r$. If we wish to recover this entry $t$ times, then we may read the entry itself and then make use of $t-1$ disjoint recovery sets, each of size $r$. This implies reading at least $(t-1)r+1$ entries, and since we may read only $\tau$ entries from each of the $m$ buckets, we must have that $m\tau\ge(t-1)r+1$.
\end{proof}
From the perspective of individual devices storing bits of data,
$m\tau$ represents the total amount of data read to provide $t$ pieces
of the original data. To minimize bandwidth usage in the case where
the entries of a codeword represent nodes on a network, we must
minimize $m\tau$.

\begin{defi}
  A $[n,k,t,m,\tau]$ linear batch code $\Cs$ with minimal locality $r$
  is optimal if it satisfies Condition \eqref{mtaubound} with equality. 
\end{defi}

We show now that Hamming codes are optimal linear batch codes.

\section{Hamming Codes}\label{s3}

Hamming codes were first introduced in 1950 by Richard Hamming. In what follows, we consider binary Hamming odes over $\mathbb{F}_{2}$. The parameters of binary Hamming codes are shown in \cite{bluebk}.
\begin{defi}\label{ham}
For some $s\geq 2$, let $H \in \F_{2}^{2^{s}-1 \times s}$ be a matrix whose columns are all of the nonzero vectors of $\mathbb{F}_{2}^{s}$. Let $n=2^{s}-1$. We use $H$ as our parity check matrix and define the binary Hamming Code:
\[\Hs_{s} \coloneqq \{c\in\mathbb{F}_{2}^{n}\mid cH^{T}=0\}\]
\end{defi}
\indent It is well-known that $\Hs_{s}$ is a $[2^{s}-1,2^{s}-1-s,3]$ cyclic code. Its dual code, the simplex code, is a $[2^{s}-1,s,2^{s-1}]$ cyclic code. Thus, by Lemmma \ref{haminfo}, the locality of $\Hs_{s}$ is $2^{s-1}-1$. We now present the batch properties of binary Hamming Codes.


Hamming codes are optimal linear batch codes.
\begin{thm}\label{hambatch}
A binary $[n=2^{s}-1,k=2^{s}-1-s]$ Hamming code is an optimal batch code with properties $[2^{s}-1,2^{s}-1-s,2,m,\tau]$, for any $m,\tau\in\mathbb{N}$ such that $m\tau=2^{s-1}$.
\end{thm}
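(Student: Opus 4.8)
The plan is to separate the statement into its two components—optimality and the batch property itself—and dispatch optimality immediately. We are told via Lemma~\ref{haminfo} that $\Hs_s$ has locality $r = 2^{s-1}-1$, so Lemma~\ref{bound} with $t=2$ gives $m\tau \ge (t-1)r + 1 = 2^{s-1}$. Hence any realization of $\Hs_s$ as a batch code with $m\tau = 2^{s-1}$ automatically meets \eqref{mtaubound} with equality and is therefore optimal; what remains is to exhibit the batch code.

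For the batch property I would first reduce to a single base case. Since $2^{s-1}$ is a power of two, every factorization $m\tau = 2^{s-1}$ has $\tau \mid 2^{s-1}$, so $\lceil 2^{s-1}/\tau\rceil = 2^{s-1}/\tau = m$. Thus Lemma~\ref{lemucla2} propagates a $(2^s-1,\,2^s-1-s,\,2,\,2^{s-1},\,1)$ batch code to every admissible pair $(m,\tau)$, and it suffices to construct the code with $\tau=1$ and $m = 2^{s-1}$. To build it I would use the vector-space description of $\Hs_s$: identify the $n = 2^s-1$ coordinates with the nonzero vectors of $\F_2^s$ (the columns of $H$), so that the minimum-weight codewords of $\Cs^\perp$ are exactly the words $(f(v))_{v\ne 0}$ for nonzero linear functionals $f$, each with support $\{v : f(v)=1\}$ of size $2^{s-1}$. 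Fixing a nonzero $e\in\F_2^s$, I form buckets from the involution $v\mapsto v+e$: this partitions the coordinates into $2^{s-1}-1$ pairs $\{v,v+e\}$ together with the singleton $\{e\}$ (the partner $0$ of $e$ is not a coordinate), giving exactly $m = 2^{s-1}$ buckets.

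Verification then splits into cases for a request pair of indices corresponding to coordinates $a,b$. If $a$ and $b$ lie in different buckets I read each directly, incurring load $1$ in two buckets. Otherwise—either $a=b$, or $b=a+e$ in the common bucket $\{a,b\}$—I read one coordinate directly and reconstruct the other, say $a$, from the recovery set $(a+\ker f)\setminus\{a\}$ associated to a functional $f$ with $f(a)=1$. This set has size $2^{s-1}-1$ and consists of genuine (nonzero) coordinates, since $a\notin\ker f$.

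The crux, which I expect to be the main obstacle, is guaranteeing that this recovery set reads at most one coordinate per bucket and never touches the bucket already used by the direct read. The key device is to additionally require $f(e)=1$. Then $e\notin\ker f$, so no two elements $a+u_1, a+u_2$ of the recovery set satisfy $u_1+u_2=e$—which is exactly the condition for them to share a bucket—and likewise the partner $a+e$ never appears, so the bucket $\{a,a+e\}$ is avoided. Such an $f$ exists because $\{a,e\}$ is linearly independent over $\F_2$ whenever $a\ne e$, while if $a=e$ a single linear condition already suffices. The recovery set therefore occupies $2^{s-1}-1$ distinct buckets disjoint from the one read directly, for a total load of $2^{s-1}$ spread one-per-bucket—matching $m\tau = 2^{s-1}$ exactly—which completes the construction of the base case and, with it, the theorem.
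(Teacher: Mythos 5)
Your proposal is correct and follows essentially the same route as the paper: optimality from Lemma~\ref{bound} with $r=2^{s-1}-1$, reduction to the $(m,\tau)=(2^{s-1},1)$ case via Lemma~\ref{lemucla2}, buckets given by pairing coordinates whose columns sum to a fixed nonzero vector (the paper takes $e=\mathbf{1}$, the all-ones column), and recovery of a repeated coordinate from the support of a weight-$2^{s-1}$ dual codeword that meets each bucket at most once. Your explicit condition $f(e)=1$ is exactly what the paper's choice of the rows of $H$ guarantees implicitly, so the two arguments coincide.
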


\begin{proof}
First, note that $m\tau\geq (2-1)(2^{s-1}-1) + 1= 2^{s-1}$. The buckets for $m = 2^{s-1}$, $\tau = 1$ are constructed as follows. Let $H$ be the parity check matrix of a binary Hamming code, $\Hs$, with columns $h_{j} \in \F_{2}^{s}$ for $1\leq j\leq n$. If $h_{a}+h_{b}= 1$ (the all ones column), then we place $a$ and $b$ into the same bucket. Note that because $h_{\ell} = 1$ in $H$, $\ell$ is placed into its own bucket. Let $r_{d}\in\F_{2}^{n}$ be the rows in $H$ such that $1\leq d\leq (n-k)=s$. For any $c\in\Hs$, $r_{d} \cdot c^{T}=0$, and thus $\sum_{i\in\supp(r_{d})}c_{i}= 0$. As a result of this construction, if $a,b\in\supp(r_{d})$, then entry $d$ of $h_a+h_b$ is $0$, so $a$ and $b$ cannot be in the same bucket. Therefore, $\sum_{i\in \supp(r_{d})} c_{i}= 0$ only involves bits in separate buckets. Hence, any bit in a codeword can be written as a linear combination of bits in separate buckets. Now, we show how to reconstruct any two bits of information.
\begin{itemize}
\item Case 1: If $a$ and $b$ are in separate buckets, then use $c_a$ and $c_b$.

\item Case 2: Suppose $a$ and $b$ are in the same bucket. We can take $c_a$ itself. To construct $c_b$, we choose an $r_d$ such that $b\in\supp(r_d)$. Then, we can write $c_b=\sum_{i\in\supp(r_d)\setminus\{b\}}c_{i}$, which we know only contains bits in disjoint buckets.
\end{itemize}

Every bucket has cardinality $2$ aside from the bucket corresponding to the all ones column in $H$, so this construction gives us exactly $2^{s-1}$ buckets. Thus, we have shown that the batch properties hold for $m=2^{s-1}$ and $\tau = 1$. Further, Lemma \ref{lemucla2} implies that this is true for any $m,\tau$ such that $m\tau=2^{s-1}$.

Note that the locality of $\Hs$ is $2^{s-1}-1$, and therefore, $t=2$ is also maximal. Suppose instead that we could have $t\ge3$. Then in particular, each entry must be reconstructible at least $3$ times. We may take the entry itself, but then there must be at least $2$ other reconstruction sets used which are disjoint and of size $2^{s-1}-1$. These would correspond to two codewords in the dual code of weight $2^{s-1}$ with the intersection of their support being only the given entry. The sum of these codewords will thus have weight $2^{s-1}+2^{s-1}-2=2^{s}-2$. However, the all ones vector is also in the dual code. Adding this vector to the sum will produce a codeword of weight one, a contradiction. Thus, $t=2$ is maximal.
\end{proof}

\begin{exam}\label{hamex}
We now give an example for $s=3$. This Hamming Code is a $[7,4]$-linear code, and the dual code is a $[7,3]$-linear code.
\end{exam}
The parity check matrix $H$ is as follows:
\[H=
\mat{
1 & 0 & 1 & 0 & 1 & 0 & 1\\
0 & 1 & 1 & 0 & 0 & 1 & 1\\
0 & 0 & 0 & 1 & 1 & 1 & 1
}
\]
Thus, the buckets are:
\begin{center}
$\{1,6\}, \{2,5\},\{3,4\},\{7\}$
\end{center}

We note that for general batch codes we are only interested in reconstructing information bits. However, we are able to obtain any pair of bits in the codeword. Additionally, we note that although $m \tau$ is optimized, we wish to find batch codes where $t>2$. We desire a larger value as we are concerned with practical applications, and the goal is to quickly distribute data. Thus we move on to Reed-M{\"u}ller codes, where we are able to obtain larger $t$ values.

\section{Reed-M{\"u}ller Codes}\label{s4}

Reed-M\"uller codes are well known linear codes. We give some basic
properties of these codes but an interested reader can find more
information in \cite{as92}.

\begin{defi}\label{RMdef}
Let $\mathbb{F}_{q}[X_{1},\dots,X_{\mu}]$ be the ring of polynomials in $\mu$ variables with coefficients in $\mathbb{F}_{q}$ and let $\F_q^{\mu}=\{P_{1},\dots,P_{n}\}$ (so $n=q^{\mu}$). The $q$-ary Reed-M{\"u}ller code, $\RM_{q}(\rho,\mu)$ is defined as:
\[\RM_{q}(\rho,\mu) \coloneqq \{(f(P_{1}),\dots,f(P_{n}))\mid f\in\F_{q}[X_{1},\dots,X_{\mu}]^{\rho}\},
\]
where $\F_{q}[X_{1},\dots,X_{\mu}]^{\rho}$ is the set of all multivariate polynomials over $\F_q$ of total degree at most $\rho$.
\end{defi}

It is known that if $\rho<\mu(q-1)$ then the dual of a Reed-M\"uller
code $\RM_q(\rho,\mu)$ is the Reed-M\"uller code
$\RM_q(\mu(q-1)-1-\rho,\mu)$ \cite{as92}.

In the binary case, Reed-M\"uller codes can be equivalently defined
using the $(u \mid u + v)$-code construction. For completeness, we first give a description of the  $(u\mid u+v)$-code construction and the related generator matrix, which can be found in \cite{bluebk}.
\begin{defi}\label{uv}
 Given two linear codes $\Cs_{1}, \Cs_{2}$ with identical alphabets and block lengths, we construct a new code $\Cs$ as follows:
\[\Cs \coloneqq \{(u \mid u + v) \mid u \in \Cs_{1}, \, v \in \Cs_{2} \}.\]
\end{defi}

Let $G$, $G_{1}$, and $G_{2}$ be the generator matrices for the codes $\Cs$, $\Cs_{1}$, and $\Cs_{2}$, respectively, where $\Cs$ is obtained from $\Cs_{1}$ and $\Cs_{2}$ via the $(u \mid u+v)$-construction. Then we have
\[G \coloneqq \mat{
    G_{1} & G_{1} \\
    0 & G_{2}
}\]
\medskip
From this, we have the following proposition.
\begin{prop}\label{uvprop}
Let $\Cs_{1}$ be an $[n, k_{1}, d_{1}]$-linear code, $\Cs_{2}$ an $[n, k_{2}, d_{2}]$-linear code, and $\Cs$ the code obtained from $\Cs_{1}, \Cs_{2}$ via the $(u \mid u+v)$-construction. Then, $\Cs$ is an $[2n,k,d]$-code where $k=k_{1}+k_{2}$ and $d=\min\{2d_{1},d_{2}\}$.
\end{prop}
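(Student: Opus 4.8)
The plan is to establish the three claimed parameters of $\Cs$ separately, working directly from the generator matrix
\[
G = \mat{G_{1} & G_{1} \\ 0 & G_{2}}
\]
exhibited just before the statement. The length $2n$ is immediate from the fact that each row of $G$ has $2n$ entries, since $\Cs_1$ and $\Cs_2$ both have block length $n$. For the dimension, I would argue that the $k_1+k_2$ rows of $G$ are linearly independent, hence $k=k_1+k_2$. The cleanest way is to observe that a linear combination $(u \mid u+v)$ with $u\in\Cs_1$ and $v\in\Cs_2$ is the zero vector only if $u=0$ (from the first $n$ coordinates) and then $v=0$ (from the last $n$ coordinates); since $G_1$ and $G_2$ have independent rows by hypothesis, the combining coefficients must all vanish. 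This gives the injectivity of the encoding map and so $\dim\Cs = k_1+k_2$.

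The real content is the minimum distance formula $d=\min\{2d_1,d_2\}$. I would take an arbitrary nonzero codeword $c=(u \mid u+v)$ with $u\in\Cs_1$, $v\in\Cs_2$ and compute $\operatorname{wt}(c)=\operatorname{wt}(u)+\operatorname{wt}(u+v)$, then split into two cases according to whether $v=0$. If $v=0$, then $u\neq 0$ and $\operatorname{wt}(c)=2\operatorname{wt}(u)\geq 2d_1$. If $v\neq 0$, I would use the triangle-type inequality for Hamming weight, $\operatorname{wt}(u+v)\geq \operatorname{wt}(v)-\operatorname{wt}(u)$, to get
\[
\operatorname{wt}(c)=\operatorname{wt}(u)+\operatorname{wt}(u+v)\geq \operatorname{wt}(u)+\operatorname{wt}(v)-\operatorname{wt}(u)=\operatorname{wt}(v)\geq d_2.
\]
Combining the two cases shows every nonzero codeword has weight at least $\min\{2d_1,d_2\}$, giving the lower bound $d\geq\min\{2d_1,d_2\}$.

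To finish I would show the bound is attained, so that $d=\min\{2d_1,d_2\}$ rather than merely $d\geq$. Choosing a minimum-weight word $u_0\in\Cs_1$ and taking $v=0$ produces a codeword $(u_0\mid u_0)$ of weight exactly $2d_1$; choosing a minimum-weight word $v_0\in\Cs_2$ and taking $u=0$ produces $(0\mid v_0)$ of weight exactly $d_2$. Hence both $2d_1$ and $d_2$ are realized as weights of genuine codewords, so the minimum weight equals $\min\{2d_1,d_2\}$.

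The step I expect to require the most care is the inequality $\operatorname{wt}(u+v)\geq \operatorname{wt}(v)-\operatorname{wt}(u)$ used in the $v\neq0$ case: it is the reverse-triangle form of the fact that Hamming weight is a metric-induced norm, and one must be sure the resulting lower bound $\operatorname{wt}(v)\geq d_2$ genuinely dominates the contribution from $u$ rather than being weakened by it. This is precisely why the case analysis on whether $v=0$ is needed, since the bound $2d_1$ can only come from the $v=0$ branch where $\operatorname{wt}(u+v)=\operatorname{wt}(u)$ exactly. Everything else is a routine verification from the definition of the $(u\mid u+v)$-construction.
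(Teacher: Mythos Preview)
Your argument is correct and is essentially the standard textbook proof of this classical fact. Note, however, that the paper itself does not supply a proof of this proposition: it is stated as a known result imported from \cite{bluebk}, introduced by ``From this, we have the following proposition,'' with no accompanying argument. So there is nothing in the paper to compare your proof against beyond observing that you have reproduced the usual derivation (case split on $v=0$ versus $v\neq 0$, reverse triangle inequality for Hamming weight, and explicit witnesses $(u_0\mid u_0)$ and $(0\mid v_0)$ showing the bound is tight). Your write-up is complete and self-contained, which is more than the paper offers for this particular statement.
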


Later, our focus will be on $q=2$, so when referring to
$\RM_{2}(\rho,\mu)$ we omit the $2$ for convenience. We obtain the
following equivalent definition of a binary Reed-M\"uller code.

\begin{defi}\label{bRMdef}
Let $\rho<\mu$. A binary Reed-M{\"u}ller code $\RM(\rho,\mu)$ is defined as follows:
\[\RM(\rho,\mu) \coloneqq \{ (u \mid u + v) \mid u \in \RM(\rho, \mu-1), \, v \in \RM(\rho-1, \mu-1) \}\]
where $\RM(0,\mu) \coloneqq 1$ of length $2^{\mu}$, and $\RM(\mu,\mu)
\coloneqq I_{2^{\mu}}$. 
\end{defi}

As a consequence if $G_{\rho,\mu}$ is the generator matrix of the code
$\RM(\rho,\mu)$, then 
\[G_{\rho,\mu} \coloneqq 
\mat{
    G_{\rho,\mu-1} & G_{\rho,\mu-1} \\
    0 & G_{\rho-1,\mu-1}
}\]



We now examine the batch properties of the {$(u\mid v+u)$}-code construction. The first notable result comes from codes that are contained in other codes:
\begin{thm}\label{subspacethm}
Let $\Cs_{1},\Cs_{2}$ be codes of length $n$ and dimension $k_1$ and $k_2$, respectively such that $\Cs_2\subseteq\Cs_1$. If $\Cs_{1}$ is a $(n,k_{1},t,m,\tau)$ batch code, then $\Cs_{2}$ is a $(n,k_{2},t,m,\tau)$ batch code.
\end{thm}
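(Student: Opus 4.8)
The plan is to exhibit the batch structure of $\Cs_2$ by inheriting it directly from that of $\Cs_1$, so that the buckets and the per-bucket load bound $\tau$ are literally reused. First I would fix a generator matrix $G_1$ of $\Cs_1$ and use the hypothesis $\Cs_2\subseteq\Cs_1$ to arrange its rows so that
\[
G_1 = \mat{G_2 \\ G'},
\]
where the first $k_2$ rows $G_2$ form a generator matrix of $\Cs_2$ and the remaining $k_1-k_2$ rows extend a basis of $\Cs_2$ to a basis of $\Cs_1$. This is possible precisely because $\Cs_2$ is a subspace of $\Cs_1$, and it is the only place where the inclusion hypothesis enters.

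Next I would view a batch code concretely as a partition of the $n$ codeword coordinates into $m$ buckets together with a family of reconstruction procedures, each of which reads codeword coordinates and returns a requested information symbol while respecting the bound of $\tau$ symbols read per bucket. Since $\Cs_1$ and $\Cs_2$ share the same length $n$, I would reuse verbatim the partition of $[n]$ into the $m$ buckets coming from $\Cs_1$, so that $m$ and $\tau$ are unchanged for $\Cs_2$.

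The key step is the correspondence between messages. A message $y\in\F_q^{k_2}$ encoded by $\Cs_2$ produces the codeword $yG_2 = (y \mid 0)G_1$, i.e.\ exactly the $\Cs_1$-codeword associated with the message $(y \mid 0)\in\F_q^{k_1}$; in particular every codeword of $\Cs_2$ is a codeword of $\Cs_1$, and the $i$-th information symbol agrees for both codes whenever $i\in[k_2]\subseteq[k_1]$. Given any $t$-tuple of (not necessarily distinct) indices $i_1,\dots,i_t\in[k_2]$, I would regard them as indices in $[k_1]$ and invoke the batch reconstruction of $\Cs_1$ for that request. Because those reconstruction procedures act on codeword coordinates and are valid for every codeword of $\Cs_1$ — in particular for the ones coming from messages of the form $(y \mid 0)$ — they correctly output $y_{i_1},\dots,y_{i_t}$ while reading at most $\tau$ symbols from each bucket, which is exactly the statement that $\Cs_2$ is an $(n,k_2,t,m,\tau)$ batch code.

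The argument is essentially bookkeeping, so I do not expect a genuine obstacle; the single point that requires care is the observation underlying the last step — namely that the reconstruction functions of $\Cs_1$ depend only on the contents of the buckets (the codeword) and not on the ambient dimension $k_1$. Once this is made explicit, restricting $\Cs_1$'s procedures to the smaller family of codewords in $\Cs_2$ is automatically valid, and the parameters $m$ and $\tau$ transfer without change.
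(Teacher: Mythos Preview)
Your proof is correct, and it reaches the same conclusion as the paper by a slightly different route. The paper argues on the dual side: from $\Cs_2\subseteq\Cs_1$ one gets $\Cs_1^{\perp}\subseteq\Cs_2^{\perp}$, so every parity check equation used as a recovery set for $\Cs_1$ is automatically a parity check equation for $\Cs_2$, and the bucket partition together with the load bound $\tau$ transfer verbatim. You argue instead on the primal side, choosing a stacked generator $G_1=\mat{G_2\\ G'}$ so that a $\Cs_2$-message $y$ corresponds to the $\Cs_1$-message $(y\mid 0)$ with identical codeword, and then invoking $\Cs_1$'s reconstruction procedures for indices in $[k_2]\subseteq[k_1]$. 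Your formulation has the virtue of being agnostic about \emph{how} recovery is implemented (it need not be through linear parity checks), while the paper's dual argument is shorter in the linear setting and makes explicit which equations are being reused. One small point worth stating outright: the particular generator $G_1$ with $G_2$ on top is your choice, not part of the hypothesis; this is harmless here because the paper treats the batch property as intrinsic to the code rather than to a fixed encoding, but making that assumption explicit would remove any ambiguity.
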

\begin{proof}
Note that $\Cs_{2}^{\perp}\subseteq\Cs_{1}^{\perp}$ because $\Cs_{1}\subseteq\Cs_{2}$. Therefore the same parity check equations for $\Cs_{2}$ apply to $\Cs_{1}$. Thus, to recover information in $\Cs_1$, we may use the same parity check equations we would in $\Cs_2$, which implies $\Cs_{1}$ is at least a $(n,k_{1},t,m,\tau)$ batch code.
\end{proof}

We now introduce results for a $(u\mid u+v)$-code construction.
\begin{thm}\label{uvthm}
Let $n,k_1,k_2\in\N$ such that $n\ge k_1\ge k_2$, and let $\Cs$ be a $[n,k_1+k_2]$ linear code. Then let $\Cs^{\perp}$ be a $(u\mid u+v)$-code construction of $\Cs_1^{\perp}$ and $\Cs_2^{\perp}$, where
\begin{itemize}
\item $\Cs_1{^\perp}$ is an $[n, n-k_1]$ linear code and
  $\Cs_2{^\perp}$ is an $[n, n-k_2]$ linear code, and
\item $\Cs_2^{\perp}\subseteq\Cs_1^{\perp}$.
\end{itemize}
If $\Cs_2$ is a $(n,k,t,m\tau)$ batch code, then $\Cs$ is a $(2n,k_1+k_2,t,m,\tau)$ batch code.
\end{thm}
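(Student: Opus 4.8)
The plan is to reduce everything to a single structural observation: under the stated hypotheses, \emph{both halves} of every codeword of $\mathcal{C}$ are themselves codewords of $\mathcal{C}_2$, so the batch scheme of $\mathcal{C}_2$ can be run on each half separately. First I would make the dual structure explicit. Writing $\mathcal{C}^\perp$ as the $(u\mid u+v)$-construction of $\mathcal{C}_1^\perp$ (the $u$-code) and $\mathcal{C}_2^\perp$ (the $v$-code), every element of $\mathcal{C}^\perp$ has the form $(p\mid p+w)$ with $p\in\mathcal{C}_1^\perp$ and $w\in\mathcal{C}_2^\perp$. Taking $p=0$ shows $(0\mid w)\in\mathcal{C}^\perp$, and since $\mathcal{C}_2^\perp\subseteq\mathcal{C}_1^\perp$ we may also take $p=w$ to get $(w\mid w)\in\mathcal{C}^\perp$; subtracting gives $(w\mid 0)\in\mathcal{C}^\perp$ as well. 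Hence for every $(x\mid y)\in\mathcal{C}$ and every $w\in\mathcal{C}_2^\perp$ we have $\langle w,x\rangle=\langle w,y\rangle=0$, i.e. $x,y\in\mathcal{C}_2$. Equivalently $\mathcal{C}\subseteq\mathcal{C}_2\oplus\mathcal{C}_2$ (of length $2n$), and every parity-check relation of $\mathcal{C}_2$ is simultaneously available on each half. This is also where the containment $\mathcal{C}_2^\perp\subseteq\mathcal{C}_1^\perp$ enters; the finer constraint $x+y\in\mathcal{C}_1$ merely cuts $\mathcal{C}$ out of $\mathcal{C}_2\oplus\mathcal{C}_2$ and is consistent with the dimension $k_1+k_2$ from Proposition \ref{uvprop}.

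Next I would lift the batch scheme. Given $t$ query positions in $[2n]$, I record for each whether it lies in the left or right half and let $i_1,\dots,i_t\in[n]$ be the corresponding coordinate indices, which form a \emph{multiset} since a left and a right query may share an index. Feeding this multiset into the $(n,k_2,t,m\tau,1)$ batch code $\mathcal{C}_2$ produces pairwise disjoint recovery sets $R_1,\dots,R_t\subseteq[n]$ reading at most one symbol from each of the $m\tau$ buckets of $\mathcal{C}_2$. For the $\ell$-th query I then use $R_\ell$ inside the half to which it belongs: a left query $x_{i_\ell}$ is recovered from the left copies of $R_\ell$ via $x_{i_\ell}=\sum_{j\in R_\ell}x_j$ (valid because $x\in\mathcal{C}_2$), and symmetrically for a right query using $y\in\mathcal{C}_2$. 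These lifted sets remain pairwise disjoint, since two sets in the same half inherit disjointness from the $R_\ell$ and two sets in opposite halves are disjoint automatically.

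The third and, I expect, genuinely delicate step is to choose the buckets of $\mathcal{C}$ so that the load does not double. I would form one bucket of $\mathcal{C}$ from both the left copy and the right copy of each bucket of $\mathcal{C}_2$, giving $m\tau$ buckets that cover all $2n$ coordinates. The crucial point is that each such combined bucket is read at most once: the single symbol the scheme is entitled to read from a given $\mathcal{C}_2$-bucket lands on a single coordinate $j\in[n]$, and because the recovery sets $R_\ell$ are disjoint that coordinate belongs to only one $R_\ell$, so its symbol is read on \emph{either} the left copy \emph{or} the right copy, never both. This is exactly why the naive ``two independent copies'' bound of $2m$ buckets is avoidable, and it is the heart of the argument. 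It shows $\mathcal{C}$ is a $(2n,k_1+k_2,t,m\tau,1)$ batch code; alternatively one may phrase this as proving $\mathcal{C}_2\oplus\mathcal{C}_2$ is such a code and then invoking Theorem \ref{subspacethm} through the inclusion $\mathcal{C}\subseteq\mathcal{C}_2\oplus\mathcal{C}_2$.

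Finally I would apply Lemma \ref{lemucla2} to this $(2n,k_1+k_2,t,m\tau,1)$ batch code, regrouping the $m\tau$ load-$1$ buckets into $m$ buckets of load $\tau$, which yields the asserted $(2n,k_1+k_2,t,m,\tau)$ batch code and completes the proof. I would remark that this is precisely the inductive engine for Reed--M\"uller codes: with $\mathcal{C}=\RM(\rho,\mu)$ and $\mathcal{C}_2=\RM(\rho,\mu-1)$ the statement promotes a batch property at length $2^{\mu-1}$ to one at length $2^{\mu}$ while preserving the read budget $m\tau$.
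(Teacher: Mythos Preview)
Your proof is correct and matches the paper's: both show $(w\mid 0),(0\mid w)\in\mathcal{C}^\perp$ for every $w\in\mathcal{C}_2^\perp$, pair coordinate $i$ with $i+n$ into a common bucket, lift each $\mathcal{C}_2$-recovery set to the half containing its query, and use disjointness of the $R_\ell$ to keep the bucket load from doubling. The only discrepancy is that you read the (mistyped) hypothesis as $(n,k_2,t,m\tau,1)$ and then finish via Lemma~\ref{lemucla2}, whereas the paper intends $(n,k_2,t,m,\tau)$ and carries general $\tau$ directly---your disjointness argument (``each index lies in a unique $R_\ell$, hence is read on only one side'') works verbatim for arbitrary $\tau$, so the detour is unnecessary but harmless.
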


\begin{proof}
  The first two parameters of $\Cs$ follow from the definition of a
  $(u\mid u+v)$ construction. Let $\Cs$ be constructed as described, and let $\Cs_2$ be an $(n,k,t,m,\tau)$ batch code. Then consider any $t$-tuple of indices $i_1,\dots,i_t\in[2n]$ and let $i_j'=i_j$ if $i_j\in[n]$ and $i_j'=i_j-n$ otherwise. Then we know that $i_1',\dots,i_t'\in[n]$, and so by the batch properties of $\Cs_2$ there exist $t$ disjoint recovery sets for the entries in these indices, the union of which consists of at most $\tau$ entries in each of the $m$ buckets.
  
  If for all $i\in\{n+1,\dots,2n\}$, we place $i$ in the same bucket as $i-n$, then this results in $m$ buckets for $\Cs$. If we consider the recovery sets from above, if $R_{i_j'}$ is the recovery set for $i_j'$, and $i_j\in[n]$, then we claim that $R_{i_j'}'=R_{i_j'}$ is a recovery set for $i_j$ in $\Cs$. This is because the recovery set comes from a vector $v\in\Cs_2^{\perp}$, and by construction, since $\Cs_2^{\perp}\subseteq\Cs_1^{\perp}$, we have that $(v|0),(0|v)\in\Cs^{\perp}$. Likewise, if $i_j\notin[n]$, then by using $(0|v)$, we find that $R_{i_j'}'=\{i+n|i\in R_{i_j'}\}$ is a recovery set for $i_j$.
  
  Since the original $R_{i_j'}$ are all disjoint, so are the $R_{i_j'}'$, and so we have $t$ disjoint recovery sets, the union of which consists of at most $\tau$ elements from each of $m$ buckets, and so $\Cs$ is a $(2n,k_1+k_2,t, m,\tau)$ batch code.
\end{proof}

Because binary Reed-M{\"u}ller codes have parity check matrices that satisfy the above properties, we turn to that family of codes.

\subsection{Locality and availability properties of  $\RM_q(1, \mu)$}\label{s43}
Reed-M{\"u}ller codes for which $\rho = 1$ are known as first order Reed-M{\"u}ller codes. We look at the properties using the polynomial evaluation definition of Reed-M{\"u}ller codes. We begin with a result in the $q$-ary case.


\begin{thm}\label{pointthm}
Let $\F_q^{\mu}=\{P_{i}\mid 1\leq i \leq 2^{\mu}=n\}$ be the set of
evaluation points for $\RM_{q}(1,\mu)$. Then
$(\lambda_{1},\dots,\lambda_{n})\in \RM_{q}(1,\mu)^\perp$  if and only if 
\begin{equation}\label{conditions}
\sum_{i=1}^{n}{\lambda_{i}P_{i}}=0\textrm{ and }\sum_{i=1}^{n}{\lambda_{i}}=0.
\end{equation}
\end{thm}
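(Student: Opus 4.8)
The plan is to exhibit an explicit generating set of $\RM_q(1,\mu)$ and then translate orthogonality against that set into the two stated conditions. Since by Definition \ref{RMdef} the code $\RM_q(1,\mu)$ is the image under evaluation of all polynomials of total degree at most $1$, i.e.\ the affine functions $f = a_0 + a_1 X_1 + \cdots + a_\mu X_\mu$, it is spanned by the evaluation vectors of the $\mu+1$ monomials $1, X_1, \dots, X_\mu$. First I would record that these evaluation vectors are linearly independent: evaluating a vanishing affine combination at the origin forces $a_0=0$, and evaluating at each standard basis vector forces $a_j=0$. Hence they form a basis, and in particular a generating set. This is the only point at which one must rule out a degeneracy, and it is immediate.

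With a generating set in hand, membership $\lambda \in \RM_q(1,\mu)^\perp$ is equivalent to $\lambda$ being orthogonal to each generator. Writing $P_i = (P_{i,1},\dots,P_{i,\mu})$, orthogonality of $\lambda$ against the evaluation of the constant $1$ (the all-ones vector) reads $\sum_{i=1}^n \lambda_i = 0$, which is the second condition in \eqref{conditions}. Orthogonality against the evaluation of $X_j$ reads $\sum_{i=1}^n \lambda_i P_{i,j} = 0$ for each $j \in \{1,\dots,\mu\}$. I would then observe that these $\mu$ scalar equations are exactly the coordinate components of the single vector equation $\sum_{i=1}^n \lambda_i P_i = 0$ in $\F_q^\mu$, which is the first condition in \eqref{conditions}.

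For the converse direction, the argument is fully reversible: the two conditions in \eqref{conditions} assert precisely that $\lambda$ is orthogonal to the all-ones vector and to each coordinate-evaluation vector, hence orthogonal to every $\F_q$-linear combination of these generators, i.e.\ to all of $\RM_q(1,\mu)$. This closes the equivalence in both directions.

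The main obstacle here is conceptual rather than computational: one must be certain that orthogonality against the finite generating set $\{1, X_1,\dots,X_\mu\}$ genuinely captures orthogonality against the entire code, which rests on the claim that these evaluations span $\RM_q(1,\mu)$ (and the accompanying independence check above). Once that is secured, the remainder is merely the bookkeeping of repackaging the $\mu$ scalar orthogonality relations as one $\F_q^\mu$-valued vector equation, together with the single scalar relation coming from the constant function.
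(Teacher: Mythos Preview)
Your proof is correct and follows essentially the same route as the paper: both arguments reduce membership in $\RM_q(1,\mu)^\perp$ to orthogonality against the evaluation vectors of the monomials $1, X_1, \dots, X_\mu$, then repackage the $\mu$ coordinate equations as the single vector identity $\sum_i \lambda_i P_i = 0$. One small remark: your linear-independence check is superfluous, since only spanning (which is immediate from Definition~\ref{RMdef}) is needed to conclude that orthogonality against the generators is equivalent to orthogonality against the whole code.
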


\begin{proof}
First, if $(\lambda_{1},\dots,\lambda_{n})$ is in the dual code, then by definition,
\begin{equation}
\sum_{i=1}^{n}{\lambda_{i}f(P_{i})}= 0\label{eq6}
\end{equation}
for every polynomial $f\in\F_q[x_1,\dots,x_{\mu}]^1$. In particular, note that for any $1\le k\le\mu$, from $f_k(x_1,\dots,x_\mu)=x_k$, we have
\[\sum_{i=1}^n\lambda_if_k(P_i)=\sum_{i=1}^n\lambda_ip_{i,k}=0,\]
where $p_{i,k}$ is the $k$th entry of point $P_i$. We may gather these equations together for $1\le k\le\mu$ to write the linear combination
\[\sum_{i=1}^n\lambda_iP_i=0.\]
We then consider $f_0=1$, and Equation \eqref{eq6} becomes $\sum_{i=1}^n\lambda_i=0$,
and so we have this direction.

For the other direction, assume that
\[\sum_{i=1}^{n}{\lambda_{i}P_{i}}=0\textrm{ and }\sum_{i=1}^{n}{\lambda_{i}}=0.\]
Then in particular, for any $1\le k\le\mu$, we have $\sum_{i=1}^n\lambda_ip_{i,k}=0$,
and so for any $f\in\F_q[x_1,\dots,x_{\mu}]^1$, by linearity we have that
\[\sum_{i=1}^n\lambda_if(P_i)=\sum_{i=1}^n\lambda_i\left[f_0+\sum_{k=1}^{\mu}f_kp_{i,k}\right]=f_0\sum_{i=1}^n\lambda_i+\sum_{k=1}^{\mu}f_k\sum_{i=1}^n\lambda_ip_{i,k}=0,\]
and thus $(\lambda_1,\dots,\lambda_{\mu})\in\RM(1,\mu)^{\perp}$.
\end{proof}

From Theorem \ref{pointthm} we obtain the following corollaries:

\begin{coro}\label{mwdualg2}
The minimum distance of $\RM_{q}(1,\mu)^{\perp}$ is $4$ if $q = 2$ and $3$ otherwise.
\end{coro}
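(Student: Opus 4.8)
The plan is to convert Theorem \ref{pointthm} into a statement about linearly dependent columns. The two conditions $\sum_i\lambda_iP_i=0$ and $\sum_i\lambda_i=0$ say exactly that the vectors $(1,P_i)\in\F_q^{\mu+1}$ satisfy the dependence $\sum_i\lambda_i(1,P_i)=0$. These vectors are the columns of a generator matrix $G$ of $\RM_q(1,\mu)$ (whose rows record the evaluations of $f_0=1$ and $f_k=x_k$), and since $G$ is a parity-check matrix of $\RM_q(1,\mu)^\perp$, the minimum distance $d'$ equals the least number of columns $(1,P_i)$ that are linearly dependent over $\F_q$. I would carry out the whole argument in this reformulation.

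First I would establish the bound $d'\ge 3$ uniformly in $q$. A single column $(1,P_i)$ is nonzero, so no weight-one dependence exists, and two columns $(1,P_i),(1,P_j)$ are dependent only if one is a scalar multiple of the other; because both first coordinates equal $1$ the scalar must be $1$, forcing $P_i=P_j$, which contradicts distinctness of the evaluation points. Hence at least three columns are needed. For $q\ne2$ I would then exhibit a weight-three codeword to pin down $d'=3$: fix a line $P_j=a+t_jb$ with $b\ne0$ and three distinct scalars $t_1,t_2,t_3\in\F_q$ (available since $q\ge3$). Using $\sum_j\lambda_j=0$, the condition $\sum_j\lambda_jP_j=0$ collapses to $\sum_j\lambda_jt_j=0$, so I need a nonzero kernel vector of $\mat{1 & 1 & 1 \\ t_1 & t_2 & t_3}$. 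The choice $(t_2-t_3,\,t_3-t_1,\,t_1-t_2)$ works and, crucially, has \emph{all} entries nonzero because the $t_j$ are distinct, giving a genuine weight-three dependence.

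For $q=2$ I would rule out weight three and then produce weight four. Over $\F_2$ the coefficients lie in $\{0,1\}$, so $\sum_i\lambda_i=0$ is precisely the statement that $\supp(\lambda)$ has even size; any odd-weight vector fails this parity constraint, so there is no weight-three codeword and the bound improves to $d'\ge4$. To meet it, I would pick linearly independent $P,Q\in\F_2^\mu$ (possible once $\mu\ge2$) and take the four distinct points $0,P,Q,P+Q$, whose sum is $0$ and whose number is even; their indicator is a weight-four codeword. Combining the two cases yields $d'=4$ for $q=2$ and $d'=3$ otherwise.

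The only genuinely delicate point is the nonvanishing of all three coefficients in the $q\ne2$ construction: a naive choice such as $0,v,2v$ collapses in even characteristic, and it is precisely the passage through three distinct $t_j$ that makes the weight-three codeword valid uniformly for every $q\ge3$, including $q=4,8,\dots$. I expect that to be the main thing to get right; everything else is routine bookkeeping with the characterization in Theorem \ref{pointthm}.
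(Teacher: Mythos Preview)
Your proposal is correct and follows essentially the same approach as the paper: both invoke Theorem~\ref{pointthm}, rule out weights $1$ and $2$ by the impossibility of $P_i=P_j$, use the parity constraint $\sum_i\lambda_i=0$ over $\F_2$ to exclude odd weight, and exhibit explicit weight-$3$ (resp.\ weight-$4$) codewords for $q\ne2$ (resp.\ $q=2$). Your packaging via linearly dependent columns of the generator matrix is a clean way to organize the same ideas, and your three-collinear-points construction for $q\ne2$ is slightly more uniform than the paper's choice along the first coordinate axis, but the substance is the same.
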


\begin{proof}
Let $q=2$ and suppose by way of contradiction that the minimum weight is 2. Then there exist two distinct points that sum to zero. This is not possible, and thus the minimum weight must be greater than 2. Note that the only choice of $\lambda_{i}$ is 1, and thus the sum is $0$ if and only if there are an even number of points. Therefore, the weight of the codewords is a multiple of 2, and thus the minimum weight is not 3. The following points are in $\Ps$ (for $\mu\ge2$):
\[P_0=(0,0,0,\dots,0)^T, P_1=(1,0,0,\dots,0)^T, P_2=(0,1,0,\dots,0)^T, \text{ and } P_3=P_1+P_2.
\]
These points satisfy the conditions, and thus the minimum distance for characteristic $2$ is $4$.  

If instead $q\neq 2$, let
$P_1=(0,0,0,0,\dots,0)^T, P_2=(1,0,0,0,\dots,0)^T,
P_3=(-a,0,0,0,\dots,0)^T\in\F_q^\mu$ and the entries of $\lambda$ be
$-(a+1),a$, and $1$ corresponding to the positions of $P_1,P_2,$ and
$P_3$, respectively, and $0$ otherwise. Then, If $a\neq -1,0$,
$\lambda$ satisfies Equations \eqref{conditions}.

Suppose there exists a $\lambda\in\RM_q(1,\mu)^{\perp}$ with weight $2$. Then we have two distinct points $P_i,P_j\in\F_q^{\mu}$ and $\lambda_i,\lambda_j\in\F_q$ such that $\lambda_j=-\lambda_i$ and $\lambda_k=0$ for all $k\ne i,j$. Our two conditions imply:
\[\lambda_iP_i-\lambda_iP_j=0\implies P_i=P_j,\]
A contradiction to the two points being distinct.

\end{proof}

\begin{coro}\label{q2coro}
When $q=2$, every codeword in $\RM(\rho,\mu)$ satisfies Theorem \ref{pointthm} for $\rho\leq\mu-2$.
\end{coro}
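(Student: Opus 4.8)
The plan is to observe that conditions \eqref{conditions} of Theorem \ref{pointthm} are precisely the membership test for $\RM_q(1,\mu)^\perp$, so the corollary is equivalent to the containment $\RM(\rho,\mu)\subseteq\RM(1,\mu)^\perp$ whenever $\rho\le\mu-2$. I would establish this containment in two short steps. First, I would invoke the duality formula for binary Reed-M\"uller codes recalled earlier in the excerpt: substituting $q=2$ and order $1$ into $\mu(q-1)-1-\rho=\mu-2$ gives $\RM(1,\mu)^\perp=\RM(\mu-2,\mu)$. Second, I would use the nesting property $\RM(\rho,\mu)\subseteq\RM(\rho',\mu)$ for $\rho\le\rho'$, which is immediate from Definition \ref{RMdef} since a polynomial of total degree at most $\rho$ is also of total degree at most $\rho'$. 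Taking $\rho'=\mu-2$ yields $\RM(\rho,\mu)\subseteq\RM(\mu-2,\mu)=\RM(1,\mu)^\perp$ for every $\rho\le\mu-2$, and Theorem \ref{pointthm} then asserts that each such codeword satisfies \eqref{conditions}.

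If a self-contained argument is preferred to citing the duality formula, I would instead verify the two equations directly for a codeword $(f(P_1),\dots,f(P_n))$ with $\deg f\le\rho$. Both reduce to summing reduced monomials over all of $\F_2^\mu$, and the key computation is that $\sum_{P\in\F_2^\mu}\prod_{j\in S}P_j$ equals $0$ exactly when $|S|<\mu$ and equals $1$ when $S=\{1,\dots,\mu\}$, since $\prod_{j\in S}X_j$ is nonzero on exactly the $2^{\mu-|S|}$ points whose coordinates indexed by $S$ all equal $1$. The scalar condition $\sum_i f(P_i)=0$ then follows as soon as $\deg f\le\mu-1$, while the $k$th component of the vector condition $\sum_i f(P_i)P_i=0$ is the sum of $X_kf$, whose reduced total degree is at most $\rho+1$; this is what forces the requirement $\rho+1\le\mu-1$, that is $\rho\le\mu-2$.

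I do not expect a deep obstacle here; the only real care needed is bookkeeping about the origin of the exponent $\mu-2$. The duality route makes the containment transparent, while the direct computation clarifies that the binding constraint comes from the vector condition---multiplication by a coordinate $X_k$ raises the effective degree to $\rho+1$---whereas the scalar condition alone would only require $\rho\le\mu-1$. I would present the duality argument as the main proof and add a short remark on the direct computation to explain why the bound $\rho\le\mu-2$ is sharp.
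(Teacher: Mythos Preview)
Your main argument is correct and matches the paper's proof essentially line for line: the paper also invokes the duality $\RM(1,\mu)^{\perp}=\RM(\mu-2,\mu)$ together with the nesting $\RM(\rho,\mu)\subseteq\RM(\mu-2,\mu)$ for $\rho\le\mu-2$, and then appeals to Theorem~\ref{pointthm}. Your additional direct computation and sharpness remark go beyond what the paper includes, but the core proof is the same.
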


\begin{proof}
The dual code of $\RM(1,\mu)$ is $\RM(\mu-2,\mu)$ and $\RM(\rho_{1},\mu)\subset \RM(\rho_{2},\mu)$ if $\rho_{1}<\rho_{2}$. Thus, any codeword in $\RM(\rho,\mu)$ is also in $\RM(\mu-2,\mu)$. Therefore, Theorem \ref{pointthm} implies our claim.
\end{proof}

\begin{thm}\label{athmpg2}
Let $q\ne2$. Then $\RM_{q}(1,\mu)$ has locality $2$ and availability $\delta=\left\lfloor\frac{q^{\mu}-1}{2}\right\rfloor$.
\end{thm}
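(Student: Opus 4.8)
The plan is to read both assertions off Theorem~\ref{pointthm} and Corollary~\ref{mwdualg2}, which give that $\RM_q(1,\mu)^\perp$ has minimum distance $d'=3$ when $q\ne 2$. A single weight-$3$ dual codeword recovers each coordinate in its support from the other two, so for the locality statement it suffices to exhibit, for every coordinate $i$, a weight-$3$ dual codeword whose support contains $i$; all-symbol locality $d'-1=2$ then follows as in Lemma~\ref{RMinfo}, while locality $1$ is impossible because $d'=3$ forbids weight-$2$ parity checks. Fixing the evaluation point $P_i$, I would choose any affine line through $P_i$ together with two further points $P_j,P_k$ on that line; affine dependence of three distinct collinear points yields nonzero $\lambda_i,\lambda_j,\lambda_k$ with $\lambda_i+\lambda_j+\lambda_k=0$ and $\lambda_iP_i+\lambda_jP_j+\lambda_kP_k=0$, which by Theorem~\ref{pointthm} is exactly a weight-$3$ dual codeword through $i$. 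This settles locality $2$.

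For the availability upper bound I would argue field-independently. Fix a coordinate $i$. Every recovery set has size at most $r=2$, and since $\Cs^\perp$ has no nonzero word of weight below $3$, each recovery set has size exactly $2$ and is supported on $[n]\setminus\{i\}$. Disjoint recovery sets are therefore pairwise disjoint pairs drawn from the $n-1=q^\mu-1$ remaining coordinates, so at most $\lfloor(q^\mu-1)/2\rfloor$ of them can coexist, giving $\delta\le\lfloor(q^\mu-1)/2\rfloor$.

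For the matching construction I would use the affine symmetry of $\RM_q(1,\mu)$ to assume $P_i=0$, so that recovery pairs for coordinate $i$ are precisely pairs $\{v,w\}$ of distinct nonzero evaluation points lying on a common line through the origin, i.e.\ with $w$ a scalar multiple of $v$. The clean choice is to pair each nonzero $v$ with $-v$: since $\cha\F_q\ne 2$ we have $v\ne -v$, the points $0,v,-v$ are collinear and distinct, and these $(q^\mu-1)/2$ pairs partition the nonzero vectors and are hence pairwise disjoint. Each is a genuine weight-$3$ recovery set, so $\delta\ge(q^\mu-1)/2=\lfloor(q^\mu-1)/2\rfloor$, matching the bound.

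The step I expect to be genuinely delicate is the tightness of this construction, and this is exactly where the hypothesis $q\ne 2$ is doing subtle work. The $v\leftrightarrow -v$ pairing collapses in even characteristic, where $-v=v$, so for even $q$ the $q-1$ nonzero points on each of the $(q^\mu-1)/(q-1)$ lines through the origin cannot all be paired, and one is forced into a per-line packing that yields only $\lfloor(q-1)/2\rfloor$ disjoint pairs per line. I would therefore carry out the odd-$q$ case exactly as above and scrutinize the even-$q$ case separately, since there the attainable number of disjoint recovery pairs is governed by the local count $\tfrac{q^\mu-1}{q-1}\lfloor\tfrac{q-1}{2}\rfloor$, which agrees with the global bound $\lfloor(q^\mu-1)/2\rfloor$ precisely when $q$ is odd; reconciling the two is the crux of establishing the stated equality.
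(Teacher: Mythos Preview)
For odd $q$ your argument and the paper's coincide. The paper fixes $\alpha\in\F_q\setminus\{0,-1\}$ and sends $P_b\mapsto P_c:=(\alpha+1)^{-1}(P_a+\alpha P_b)$; the ``simple counting argument'' it invokes for disjointness needs this map to be a fixed-point-free involution on $\F_q^\mu\setminus\{P_a\}$, which happens exactly when $\alpha=-\tfrac12$, and then $P_c=2P_a-P_b$ is the reflection through $P_a$---after translating $P_a$ to the origin this is precisely your pairing $v\leftrightarrow -v$. Your explicit pigeonhole upper bound is something the paper only gestures at, so your write-up is a bit cleaner there, but the construction itself is the same.

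You are right to flag the even-$q$ case, and you should push the observation one step further: the stated equality \emph{fails} for every even $q>2$ once $\mu\ge 2$, so there is nothing to reconcile. As you note, a weight-$3$ dual codeword through the coordinate at $P_a=0$ forces the other two evaluation points to be nonzero scalar multiples of one another, so disjoint recovery pairs are matchings inside the individual punctured lines through the origin. Each such line has $q-1$ points and contributes at most $\lfloor(q-1)/2\rfloor=\tfrac{q-2}{2}$ pairs, and summing over the $\tfrac{q^\mu-1}{q-1}$ lines gives availability exactly $\tfrac{(q^\mu-1)(q-2)}{2(q-1)}$, strictly below $\lfloor(q^\mu-1)/2\rfloor$ whenever $\mu\ge2$ (e.g.\ $q=4$, $\mu=2$ gives availability $5$ against the claimed $7$). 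The paper's proof does not escape this either: no choice of $\alpha$ makes $P_b\mapsto P_c$ an involution in characteristic $2$, so its disjointness claim is unjustified there. The gap you sensed is in the theorem as stated, not in your approach; the result should be read as a statement about odd $q$.
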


\begin{proof}
Let $P_a\in\F_q^{\mu}$ be an evaluation point. Then consider any $\alpha\in\F_q$ such that $\alpha\ne0,-1$. We have that $1+\alpha+(-\alpha-1)=0$, and will find corresponding points to use in the reconstruction of $P_a$. For any choice of $P_b\in\F_q^{\mu}$, $P_b\ne P_a$, take
\[P_c=(\alpha+1)^{-1}(P_a+\alpha P_b).\]
Upon rearrangement, we have that $P_a+\alpha P_b+(-\alpha-1)P_c=0$. Furthermore, we find that $P_c\ne P_a,P_b$. If $P_c=P_a$, then our equation becomes $P_a+\alpha P_b+(-\alpha-1)P_a=0$, which simplifies to $\alpha P_b-\alpha P_a=0$, which would contradict $P_b\ne P_a$. Likewise, $P_c=P_b$ would imply $P_a+\alpha P_b+(-\alpha-1)P_b=0$, which becomes $P_a-P_b=0$, another contradiction. A simple counting argument tells us that there are $\left\lfloor\frac{q^{\mu}-1}{2}\right\rfloor$ choices of pairs $P_b,P_c$ for $P_a$, and each of these corresponds to a unique $\lambda\in\RM_q(1,\mu)^{\perp}$ of weight $3$ that can be used to recover $c_a$, the supports of which intersect only in $\{a\}$. Thus, the locality is $2$ and the availability is $\left\lfloor\frac{q^{\mu}-1}{2}\right\rfloor$.
\end{proof}

As proven in \cite{as92} the minimum-weight codewords are generators
of a Reed-M\"uller code $\RM_{p^s}(\mu,\rho)$ where $p$ is a prime
number and $0\leq \rho\leq \mu(p^s-1)$ if and only if either $m=1$ or
$\mu = 1$ or $\rho <p$ or $\rho>(\mu-1)(p^s-1)+p^{s-1}-2.$ Together
with Lemma \ref{RMinfo} it follows that these Reed-M\"uller codes have
all symbol availability. 
Thus, in the following theorems, we only consider the availability of $P_1$ as this implies all symbol availability.
\begin{thm}\label{mueven}
$\RM(1,\mu)$ has availability  $\delta=\frac{2^{\mu}-1}{3}$ when $\mu$ is even. 
\end{thm}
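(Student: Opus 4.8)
The plan is to convert the availability of the coordinate indexed by $P_1=(0,\dots,0)^T$ into a combinatorial partition problem and then resolve it using the $\F_4$-structure of $\F_2^\mu$. By Theorem \ref{pointthm} (with $q=2$), a dual codeword of $\RM(1,\mu)^\perp$ is a subset $S\subseteq\F_2^\mu$ with $\sum_{P\in S}P=0$ and $|S|$ even, and by Corollary \ref{mwdualg2} the minimum weight of $\RM(1,\mu)^\perp$ is $4$. Since a recovery set of the entry $P_1$ is the support (minus the target position) of a dual codeword through that position, and the minimum dual weight is $4$, every recovery set of minimal size has exactly $3$ elements and comes from a weight-$4$ dual codeword $\{P_1,P_a,P_b,P_c\}=\{0,P_a,P_b,P_c\}$. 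Here the sum condition reduces to $P_a+P_b+P_c=0$ with $P_a,P_b,P_c$ distinct and nonzero, so a recovery set for $P_1$ is precisely the set of nonzero vectors of a $2$-dimensional $\F_2$-subspace of $\F_2^\mu$.

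First I would establish the upper bound. Distinct recovery sets must be pairwise disjoint, each has size exactly $3$, and all of them lie inside the $2^\mu-1$ nonzero points of $\F_2^\mu$ (the point $P_1=0$ itself is excluded). Hence the availability is at most $\lfloor(2^\mu-1)/3\rfloor$. When $\mu$ is even, $2^\mu\equiv(-1)^\mu\equiv1\pmod 3$, so $3\mid 2^\mu-1$ and the bound is exactly $(2^\mu-1)/3$.

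The main work is to realize this bound, i.e.\ to partition all $2^\mu-1$ nonzero vectors of $\F_2^\mu$ into $2$-dimensional $\F_2$-subspaces. Writing $\mu=2m$, I would identify $\F_2^\mu$ with the $\F_4$-vector space $\F_4^m$. Each $1$-dimensional $\F_4$-subspace $\F_4 v=\{0,v,\omega v,\omega^2 v\}$ is a $2$-dimensional $\F_2$-subspace, and its nonzero elements satisfy $v+\omega v+\omega^2 v=(1+\omega+\omega^2)v=0$, so the triple $\{v,\omega v,\omega^2 v\}$ is a valid recovery set for $P_1$. The distinct $1$-dimensional $\F_4$-subspaces partition the nonzero vectors of $\F_4^m$, and there are $(4^m-1)/(4-1)=(2^\mu-1)/3$ of them. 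These give $(2^\mu-1)/3$ pairwise disjoint recovery sets, matching the upper bound; all-symbol availability then follows from the single-symbol count as noted just before the theorem.

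The delicate step is the achievability, and it is exactly where the parity of $\mu$ enters: a partition of $\F_2^\mu\setminus\{0\}$ into $2$-dimensional subspaces (a $2$-spread) exists only when $2\mid\mu$, which is why the clean formula $(2^\mu-1)/3$ holds only in the even case. The $\F_4$-viewpoint supplies the spread at once, but one must check carefully that every $\F_4$-line yields a genuine weight-$4$ dual codeword through $P_1$ (this is the identity $1+\omega+\omega^2=0$ in $\F_4$) and that the count $(4^m-1)/3$ indeed equals $(2^\mu-1)/3$. I expect the odd case (excluded here) to be the genuinely harder phenomenon, since no $2$-spread exists and one is forced into partial spreads, consistent with the weaker bound stated for odd $\mu$ in the introduction.
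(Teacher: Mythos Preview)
Your argument is correct but follows a genuinely different route from the paper. The paper proceeds by induction on even $\mu$: from each triple $\{S_1,S_2,S_3\}\subset\F_2^{2k}$ summing to $\bar P_1$ it manufactures four triples in $\F_2^{2k+2}$ by appending the pairs $(0,0),(1,0),(0,1),(1,1)$ in a Latin-square pattern, and adds one further triple $\{(\bar P_1|1,1),(\bar P_1|1,0),(\bar P_1|0,1)\}$, giving $4\cdot\tfrac{2^{2k}-1}{3}+1=\tfrac{2^{2k+2}-1}{3}$ disjoint recovery sets; maximality is then argued by observing that every nonzero point is used. You instead put an $\F_4$-structure on $\F_2^{2m}$ and read off the $\F_4$-lines as a $2$-spread. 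Your approach is more conceptual: it makes the upper bound explicit as a packing bound, explains exactly why parity of $\mu$ matters (existence of a $2$-spread), and avoids induction. The paper's approach is more elementary (no field extension needed) and meshes directly with the recursive construction used for the odd-$\mu$ lower bound in the next theorem. The two constructions are in fact the same object described differently: appending two coordinates is tensoring with $\F_4$, the four ``patterned'' triples are the old $\F_4$-lines viewed in $\F_4^{m+1}$, and the single extra triple is the new line $\F_4\cdot(0,\dots,0,1)$.
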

\begin{proof}
  An inductive argument on $\mu$ satisfies this claim. We look at the
  sum of evaluation points to prove our claim. We are looking for
  $\frac{2^{\mu}-1}{3}$ disjoint sets of three points of $\F_q^\mu$
  that sum to $(0,\dots,0)^T\in \F_q^\mu$. 

  It is easy to verify the claim for $\mu=2$ since there is only
  one equation for which this is true
  \[(0,1)^T+(1,0)^T+(1,1)^T=(0,0)^T.\] 
  
  Now assume the claim is true for $\mu=2k$, we show that it is true
  also for $\mu=2k+2$. We have $\frac{2^{2k}-1}{3}$ disjoint sets of
  three points that all sum to $\bar{P}_1=(0,\dots,0)^T\in
  \F_q^{2k}$. Let $P_1=(0,\dots,0)^T\in \F_q^{2k+2}$. For any choice of
  set of points $\{S_1,S_2,S_3\}\subset\F_q^{2k}$ that sum to
  $\bar{P}_1$ in $\F_q^{2k}$ it holds that
\begin{equation}\label{eq17}
\begin{split}
(S_1^T|0,0)^T+(S_2^T|0,0)^T+(S_3^T|0,0)^T&=P_1\\
(S_1^T|1,0)^T+(S_2^T|0,1)^T+(S_3^T|1,1)^T&=P_1\\
(S_1^T|0,1)^T+(S_2^T|1,1)^T+(S_3^T|1,0)^T&=P_1\\
(S_1^T|1,1)^T+(S_2^T|1,0)^T+(S_3^T|0,1)^T&=P_1.
\end{split}
\end{equation}
Additionally it also holds that
\begin{equation}\label{eq17bis}
\begin{split}
(\bar{P}_1^T|1,1)^T+(\bar{P}_1^T|1,0)^T+(\bar{P}_1^T|0,1)^T&=P_1^T.
\end{split}
\end{equation}
The four equations in \eqref{eq17} all use distinct sets of points
because $S_1,S_2,$ and $S_3$ are distinct. Now, there are
$\frac{2^{2k}-1}{3}$ sets of distinct points like $S_1,S_2,$ and
$S_3$. Thus, we have a total of
\[4\cdot\frac{2^{2k}-1}{3}+1 = \frac{2^{2k+2}-1}{3}.\]
Note that the extra one comes from Equation \eqref{eq17bis}. Also note
that our total is an integer as
\[2^{2k+2}\equiv 4^{k+1}\equiv 1^{k+1}\equiv 1\pmod{3}.\]
Because of this, every single coordinate is used, and thus we have maximal availability.
\end{proof}

\begin{thm}\label{muodd}
$\RM(1,\mu)$ has availability $\delta$ at least $\frac{2^{\mu}-4}{4}$ when $\mu$ is odd. 
\end{thm}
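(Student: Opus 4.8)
The plan is to translate the availability of the coordinate indexed by $P_1=(0,\dots,0)^T$ into a purely combinatorial statement about $\F_2^\mu$ and then give an explicit construction. By Theorem \ref{pointthm} and Corollary \ref{mwdualg2}, the minimum weight of $\RM(1,\mu)^\perp$ is $4$, and a weight-$4$ dual codeword whose support contains $P_1$ corresponds to a set $\{Q_1,Q_2,Q_3\}$ of three distinct nonzero points with $Q_1+Q_2+Q_3=0$ (equivalently, the three nonzero vectors of a $2$-dimensional subspace of $\F_2^\mu$). Hence the availability of $P_1$ equals the maximum number of pairwise disjoint such triples, i.e.\ the size of a largest partial spread of $2$-dimensional subspaces of $\F_2^\mu$. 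For even $\mu$, Theorem \ref{mueven} gives a perfect partition of the $2^\mu-1$ nonzero points into $\tfrac{2^\mu-1}{3}$ triples; for odd $\mu$ this is impossible since $3\nmid 2^\mu-1$, so I aim only for the lower bound $2^{\mu-2}-1=\tfrac{2^\mu-4}{4}$.

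For the construction I would fix a hyperplane $H\subset\F_2^\mu$ of (even) dimension $\mu-1$, with complementary coset $C=\F_2^\mu\setminus H$, and write $\phi$ for the linear functional cutting out $H$. Since any triple summing to $0$ has vanishing $\phi$-sum, it meets $C$ in an even number of points; thus every admissible triple either lies inside $H$ or consists of two points of $C$ together with a single point of $H$. The key step is to pack the $2^{\mu-1}$ points of $C$ into $2^{\mu-2}$ pairs whose \emph{connector} points $Q_i+Q_j\in H$ are pairwise distinct (they are automatically nonzero), so that the resulting triples are disjoint. I would realize such a matching algebraically: choose a codimension-one subspace $W\subset H$ and $h_0\in H\setminus W$, fix $v_0\in C$ so that $C=v_0+H$, identify $W$ with the field $\F_{2^{\mu-2}}$, pick $\gamma\in\F_{2^{\mu-2}}\setminus\F_2$, and pair $v_0+w$ with $v_0+\gamma w+h_0$ for each $w\in W$. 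Because $w\mapsto\gamma w$ permutes $W$, these pairs partition $C$, and their connectors $(1+\gamma)w+h_0$ run exactly once through the coset $W+h_0$, hence are distinct.

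It then remains to verify that the $2^{\mu-2}$ resulting triples consist of distinct points and are pairwise disjoint; this is a short check using that $w$ and $\gamma w+h_0$ lie in different cosets of $W$ and that $1+\gamma\ne 0$, so no first-type point $v_0+w$ can equal a second-type point or a connector in $H$. This yields availability at least $2^{\mu-2}\ge 2^{\mu-2}-1=\tfrac{2^\mu-4}{4}$. The base case $\mu=3$, where $\F_{2^{\mu-2}}=\F_2$ contains no admissible $\gamma$, is handled directly by the single triple $\{(1,0,0)^T,(0,1,0)^T,(1,1,0)^T\}$. I expect the main obstacle to be precisely the distinctness of the connector points: a naive pairing of $C$ by a fixed translation forces all connectors to coincide and produces only one usable triple, which is why the field-multiplication pairing is needed. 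I note in passing that one could sharpen the estimate by additionally applying the even-case spread to the untouched odd-dimensional subspace $W$, but this is not required for the stated bound.
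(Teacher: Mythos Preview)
Your argument is correct and in fact yields one more recovery set than the stated bound: for $\mu\ge 5$ you produce $2^{\mu-2}$ pairwise disjoint triples, whereas $\tfrac{2^\mu-4}{4}=2^{\mu-2}-1$. The translation to partial $2$-spreads via Theorem~\ref{pointthm} and Corollary~\ref{mwdualg2} is accurate, and the field-multiplication pairing on $W\cong\F_{2^{\mu-2}}$ does exactly what you claim: the map $w\mapsto\gamma w$ gives a bijection on $W$ (so the pairs partition $C=v_0+W\,\dot\cup\,(v_0+W+h_0)$), and the connectors $(1+\gamma)w+h_0$ sweep out $W+h_0$ bijectively since $\gamma\ne 1$. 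The verification that the three points in each triple are distinct and nonzero is routine, as you note.

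The paper proceeds quite differently: it argues by induction $\mu\mapsto\mu+2$, in parallel with the even case of Theorem~\ref{mueven}. Each triple $\{S_1,S_2,S_3\}$ summing to $0$ in $\F_2^\mu$ is lifted to four triples in $\F_2^{\mu+2}$ by appending the four $2$-bit patterns in a Latin-square arrangement, and three extra triples are manufactured from the (at least three) points left unused at the previous stage. This yields exactly $2^{\mu-2}-1$ triples at each step. Your construction is more direct and slightly sharper, and it makes the link to partial spreads explicit; the paper's argument is more elementary (no field identification) and dovetails with the structure of the even-$\mu$ proof. Either way the stated lower bound follows.
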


\begin{proof}
  We prove this by induction on
  $\mu$. 
  For $\mu=3$, let $S_1=(1,0,0)^T$, $S_2=(0,1,0)^T$ and $S_3=S_1+S_2$,
  then \[S_1+S_2+S_3=\bar{P}_1=(0,0,0)^T.\] No combination of the four remaining
  points of $\F_2^3$ sum to $\bar{P}_1$. Here we have availability $1=\frac{2^3-4}{4}$, and so we have our base case.

 Now assume that, for $\mu=2k+1$, where $k\ge1$, we have that the availability of $\RM(1,\mu)$ is at least $\frac{2^{\mu}-4}{4}$, and there are at least $3$ points that are not used in any recovery set for $P_1$. Let $S_1,S_2$, and $S_3$ be any three points in a recovery set of $P_1\in\F_2^{\mu}$. Then for $\mu+2$, the disjoint sets of three points that
 sum to $\tilde{P}_1=(0,\dots,0)\in\F_2^{\mu+5}$ can be defined by the equations:
\begin{equation*}
\begin{split}
(S_1^T|0,0)^T+(S_2^T|0,0)^T+(S_3^T|0,0)^T&=\tilde{P}_1\\
(S_1^T|1,0)^T+(S_2^T|0,1)^T+(S_3^T|1,1)^T&=\tilde{P}_1\\
(S_1^T|0,1)^T+(S_2^T|1,1)^T+(S_3^T|1,0)^T&=\tilde{P}_1\\
(S_1^T|1,1)^T+(S_2^T|1,0)^T+(S_3^T|0,1)^T&=\tilde{P}_1
\end{split}
\end{equation*}

This results in at least $4\left(\frac{2^{\mu}-4}{4}\right)=2^{\mu}-4$ possible recovery sets. However, we also have points $T_1,T_2$, and $T_3$ that are not used in $\F_2^{\mu}$, and so we may also define the following equations:

\begin{equation*}
\begin{split}
(\bar{P}_1^T|1,0)^T+(T_1^T|0,1)^T+(T_1^T|1,1)^T&=\tilde{P}_1\\
(\bar{P}_1^T|0,1)^T+(T_2^T|1,1)^T+(T_2^T|1,0)^T&=\tilde{P}_1\\
(\bar{P}_1^T|1,1)^T+(T_3^T|1,0)^T+(T_3^T|0,1)^T&=\tilde{P}_1
\end{split}
\end{equation*}

Adding these, we have availability of at least $2^{\mu}-4+3=2^{\mu}-1=\frac{2^{\mu+2}-4}{4}$, and so our property holds for $\mu+2$ as well. We also note that $3(2^{\mu}-1)+3=3\cdot2^{\mu}<4\cdot2^{\mu}=2^{\mu+2}$, and so there are at least $3$ unused points. Thus, by induction, we have that for every $k\ge1$, when $\mu=2k+1$, the availability of $\RM(1,\mu)$ is at least $\frac{2^{\mu}-4}{4}$

Thus we have achieved a lower bound on $\delta$. Note, however, that we have not shown that this is necessarily an optimal construction.
\end{proof}



We now study the batch properties of Reed-M\"uller codes. 

\subsection{Batch Properties of $\RM(1,\mu)$}\label{s432}
\begin{thm}\label{batch14}
The linear code $\RM(1,4)$ is a $(16,5,4,m,\tau)$ batch code for any $m,\tau\in\mathbb{N}$ such that $m\tau=10$.
\end{thm}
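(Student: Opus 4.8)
The plan is to first reduce the problem to the case $\tau = 1$ and then to give an explicit bucket construction on the $16$ coordinates, which I identify with the points of $\F_2^4$. Since $\RM(1,4)^\perp$ has minimum distance $4$ (Corollary~\ref{mwdualg2}) and is generated by its weight-$4$ codewords, Lemma~\ref{RMinfo} gives all-symbol locality $r = 3$, so Lemma~\ref{bound} forces $m\tau \ge (4-1)\cdot 3 + 1 = 10$; thus $m\tau = 10$ is the optimal target. Because every factorization of $10$ is $(m,\tau) \in \{(10,1),(5,2),(2,5),(1,10)\}$ and $\lceil 10/\tau\rceil\cdot\tau = 10$ in each of these cases, it suffices by Lemma~\ref{lemucla2} to exhibit a single $(16,5,4,10,1)$ batch code; the remaining factorizations then follow immediately.

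For $\tau = 1$ the task is to partition the $16$ coordinates into $10$ buckets (necessarily six of size $2$ and four of size $1$) so that any requested multiset of four indices admits four pairwise-disjoint recovery sets whose union meets each bucket at most once. By Theorem~\ref{pointthm}, the weight-$4$ codewords of $\RM(1,4)^\perp$ are exactly the $4$-subsets of $\F_2^4$ summing to $0$, i.e.\ the $2$-flats; hence a recovery set for a coordinate $a$ is either $\{a\}$ itself (a direct read) or the three remaining points of a $2$-flat through $a$. Fixing a linear line-spread $W_1,\dots,W_5$ of $\F_2^4$ (five $2$-dimensional subspaces meeting pairwise only in $0$), the translated triples $a + (W_i\setminus\{0\})$ for $1 \le i \le 5$ form five pairwise-disjoint recovery triples partitioning $\F_2^4 \setminus \{a\}$; this is precisely the availability-$5$ structure of Theorem~\ref{mueven}. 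I would then write down the ten buckets explicitly, choosing the six size-$2$ pairs in harmony with this spread so that disjoint recovery triples are forced into distinct buckets, and record the resulting recovery sets for each request.

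With the buckets fixed, I would verify the batch property by a case analysis on the multiplicity pattern of the four requested indices: $(1,1,1,1)$, $(2,1,1)$, $(2,2)$, $(3,1)$, and $(4)$. The binding case is $(4)$, where all four requests ask for a single coordinate $a$: here one must read $c_a$ directly and supply three further pairwise-disjoint recovery triples for $a$, using in total exactly one coordinate from each of the ten buckets. This is where the compatibility between the bucket partition and the spread is essential, since the nine triple-points together with $a$ must occupy all ten buckets without collision; the case $(3,1)$ is similar but slightly looser, and the lower-multiplicity patterns have ample slack. I expect this simultaneous compatibility---engineering one bucket partition for which the tight multiplicity cases can always be satisfied by some choice of spread triples---to be the main obstacle, the argument reducing in the end to a finite but careful verification over the request patterns.
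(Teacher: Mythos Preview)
Your plan is essentially the same as the paper's: reduce to $\tau=1$ via Lemma~\ref{lemucla2}, exhibit an explicit partition of the $16$ coordinates into $10$ buckets, and verify the batch property directly. The paper simply lists the buckets
\[
\{1\},\ \{2\},\ \{3\},\ \{4\},\ \{5,6\},\ \{7,8\},\ \{9,11\},\ \{10,12\},\ \{13,16\},\ \{14,15\}
\]
and asserts that ``it can be verified'' that every $4$-query succeeds, whereas you add a more conceptual layer---organizing recovery triples via a fixed line-spread of $\F_2^4$ and structuring the verification by multiplicity pattern---and also include the optimality calculation from Lemma~\ref{bound}, which the paper postpones until after Theorem~\ref{batchg4}. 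Both routes ultimately rest on a finite check; your spread framework is a sensible way to cut down the verification, but be aware that the paper does not actually carry out the case analysis, so your proposal is in fact more detailed than the published proof.
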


\begin{proof}
First, note that the dual code of $\RM(1,4)$ is $\RM(2,4)$, which informs us how to reconstruct elements of the codewords. The generator matrix for $\RM(1,4)$ can be recursively constructed as follows:

\[
G_{1,4}= 
\mat{
G_{1,3} & G_{1,3}\\ 0 & G_{0,3}\\
}
\]

It can be verified that any query of $4$ coordinates of a codeword in $\RM(1,4)$ is possible with the following partition into buckets:
\begin{center}
$\{1\}$, $\{2\}$, $\{3\}$, $\{4\}$, $\{5,6\}$, $\{7,8\}$, $\{9,11\}$, $\{10,12\}$, $\{13,16\}$, $\{14, 15\}$.
\end{center}

In the above case, $m=10$ and $\tau=1$. By Lemma \ref{lemucla2}, this holds for any $m,\tau\in\mathbb{N}$ such that $m\tau=10$.
\end{proof}

We now show how to extend this construction to $\RM(1,\mu)$ for any $\mu\geq 4$.

\begin{thm}\label{batchg4}
Any first order Reed-M{\"u}ller code, $\RM(1,\mu)$, with $\mu\geq 4$, has batch properties $(n,k,4,m,\tau)$ for any $m,\tau\in\mathbb{N}$ such that $m\tau=10$.
\end{thm}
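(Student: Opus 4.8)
The plan is to prove this by induction on $\mu$, taking $\mu=4$ (Theorem \ref{batch14}) as the base case and using the $(u\mid u+v)$ machinery of Theorem \ref{uvthm} for the inductive step. The key observation is that for $q=2$ the dual of $\RM(1,\mu)$ is $\RM(\mu-2,\mu)$, which by Definition \ref{bRMdef} decomposes as
\[
\RM(\mu-2,\mu)=\{(u\mid u+v): u\in\RM(\mu-2,\mu-1),\ v\in\RM(\mu-3,\mu-1)\}.
\]
This presents $\Cs^\perp=\RM(1,\mu)^\perp$ as a $(u\mid u+v)$-construction of $\Cs_1^\perp=\RM(\mu-2,\mu-1)$ and $\Cs_2^\perp=\RM(\mu-3,\mu-1)$, which is exactly the input shape required by Theorem \ref{uvthm}.

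First I would verify the structural hypotheses. Since $\mu-3<\mu-2$ we have the nesting $\Cs_2^\perp=\RM(\mu-3,\mu-1)\subseteq\RM(\mu-2,\mu-1)=\Cs_1^\perp$, which is the subcode relation Theorem \ref{uvthm} demands so that the recovery vectors $(v\mid 0)$ and $(0\mid v)$ both lie in $\Cs^\perp$. Dualizing inside $\F_2^{2^{\mu-1}}$ identifies $\Cs_1=\RM(0,\mu-1)$, the repetition code of dimension $k_1=1$, and $\Cs_2=\RM(1,\mu-1)$, of dimension $k_2=\mu$. Thus $k_1+k_2=\mu+1=\dim\RM(1,\mu)$, and the length doubles from $2^{\mu-1}$ to $2^\mu$, matching the conclusion of Theorem \ref{uvthm}.

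The inductive step then reduces entirely to showing that $\Cs_2=\RM(1,\mu-1)$ already carries the desired batch structure. For $\mu\geq 5$ we have $\mu-1\geq 4$, so this is precisely the inductive hypothesis: $\RM(1,\mu-1)$ is a $(2^{\mu-1},\mu,4,m,\tau)$ batch code for every $m,\tau\in\N$ with $m\tau=10$. Feeding this into Theorem \ref{uvthm} via the decomposition above yields that $\Cs=\RM(1,\mu)$ is a $(2^\mu,\mu+1,4,m,\tau)$ batch code for all such $m,\tau$, completing the induction.

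I expect no deep obstacle: the content is bookkeeping to confirm that the dual Reed--M\"uller decomposition aligns with the hypotheses of Theorem \ref{uvthm}. The point requiring the most care is the correct assignment of roles, namely that $\Cs_2=\RM(1,\mu-1)$ (not $\Cs_1$) is the component that must already be a batch code, while $\Cs_1=\RM(0,\mu-1)$ is trivial, together with checking that the subcode relation $\Cs_2^\perp\subseteq\Cs_1^\perp$ holds in the direction Theorem \ref{uvthm} requires.
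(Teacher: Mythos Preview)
Your proposal is correct and follows essentially the same route as the paper: induction on $\mu$ with base case $\mu=4$ from Theorem~\ref{batch14}, decomposing $\RM(1,\mu)^\perp=\RM(\mu-2,\mu)$ via the $(u\mid u+v)$-construction into $\Cs_1^\perp=\RM(\mu-2,\mu-1)$ and $\Cs_2^\perp=\RM(\mu-3,\mu-1)$, checking the nesting $\Cs_2^\perp\subseteq\Cs_1^\perp$, and invoking Theorem~\ref{uvthm} with $\Cs_2=\RM(1,\mu-1)$ given by the inductive hypothesis. Your write-up is in fact slightly more explicit than the paper's in identifying $\Cs_1=\RM(0,\mu-1)$ and verifying the dimension count $k_1+k_2=\mu+1$.
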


\begin{proof}
We will proceed by induction. First, we have just shown this for the base case where $\mu=4$. Now, assume that for some $\mu-1\ge4$, we have that $\RM(1,\mu-1)$ has batch properties $(n,k,4,m,\tau)$. Recall that the dual code of $\Cs=\RM(1,\mu)$ is $\Cs^{\perp}=\RM(\mu-2,\mu)$. Then as Reed-M{\"u}ller codes follow the $(u \mid u+v)$-construction, $\Cs^{\perp}$ is the $(u\mid u+v)$-code construction of $\Cs_1^{\perp}=\RM(\mu-2,\mu-1)$ and $\Cs_2^{\perp}=\RM(\mu-3,\mu-1)$. Since
$\RM(\mu-3,\mu-1)\subseteq \RM(\mu-2,\mu-1)$, we have $\Cs_2^{\perp}\subseteq\Cs_1^{\perp}$. This means we may apply Theorem $\ref{uvthm}$. Since $\Cs_2=\RM(1,\mu-1)$, we know that $\Cs$ is also a $(n,k,4,m,\tau)$ batch code. By induction, this is now true for any $\mu\ge4$.
\end{proof}

Since the locality of these codes is $r=3$, for $t=4$, we have $m\tau=10=(4-1)\cdot3+1=(t-1)r+1$, and thus we have optimal batch properties. We now extend this even further for most $\RM(\rho,\mu)$.

\subsection{Batch Properties of $\RM(\rho, \mu)$}\label{s44}
We begin with a preliminary result that uses the recursive construction of Reed-M{\"u}ller binary codes.
\begin{lemma}\label{rowspace}
Let $a\in\RM(\rho-1,\mu-2)$. Then
\[(a|a|0|0),(a|0|a|0),(a|0|0|a),(0|a|a|0),(0|a|0|a),(0|0|a|a)\in\RM(\rho,\mu),\]
where $0\in\F_2^{\mu}$.
\end{lemma}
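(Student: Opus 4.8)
The plan is to apply the $(u\mid u+v)$-recursion from Definition~\ref{bRMdef} twice, so as to decompose a generic codeword of $\RM(\rho,\mu)$ into four blocks of length $2^{\mu-2}$, and then to write down, for each of the six listed vectors, explicit component codewords that certify membership.

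Applying the recursion once gives $\RM(\rho,\mu)=\{(u\mid u+v)\}$ with $u\in\RM(\rho,\mu-1)$ and $v\in\RM(\rho-1,\mu-1)$. Applying it again to each factor, I would write $u=(u_1\mid u_1+v_1)$ and $v=(u_2\mid u_2+v_2)$, where $u_1\in\RM(\rho,\mu-2)$, $v_1,u_2\in\RM(\rho-1,\mu-2)$, and $v_2\in\RM(\rho-2,\mu-2)$. Since we work over $\F_2$, the resulting codeword has the four-block form
\[(u_1\mid u_1+v_1\mid u_1+u_2\mid u_1+v_1+u_2+v_2).\]
To realize a target $(b_1\mid b_2\mid b_3\mid b_4)$, I solve the linear system over $\F_2$, obtaining $u_1=b_1$, $v_1=b_1+b_2$, $u_2=b_1+b_3$, and $v_2=b_1+b_2+b_3+b_4$.

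For each of the six vectors every block $b_i$ equals either $0$ or $a$, with exactly two blocks equal to $a$; hence $b_1+b_2+b_3+b_4=0$ and so $v_2=0$. It then remains only to check the four membership conditions $u_1,v_1,u_2\in\{0,a\}$ and $v_2=0$. Because $a\in\RM(\rho-1,\mu-2)$ and $\RM(\rho-1,\mu-2)\subseteq\RM(\rho,\mu-2)$, the element $a$ (and of course $0$) lies in every code that $u_1,v_1,u_2$ must belong to, while $0\in\RM(\rho-2,\mu-2)$ trivially. This establishes all six memberships simultaneously.

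The only point requiring care — and what I expect to be the crux — is that the deepest component $v_2$ must lie in the smaller code $\RM(\rho-2,\mu-2)$, where $a$ need not live. The saving observation is that $v_2$ always vanishes precisely because an even number (namely two) of the blocks equal $a$; symmetrically, $u_1$ only has to lie in the larger code $\RM(\rho,\mu-2)$, into which $a$ is pulled by the inclusion $\RM(\rho-1,\mu-2)\subseteq\RM(\rho,\mu-2)$. Everything else is routine $\F_2$ linear algebra, so once the four-block expansion and this parity observation are in place the six cases follow with essentially no computation.
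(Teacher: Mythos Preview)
Your proof is correct and follows essentially the same approach as the paper: both apply the $(u\mid u+v)$ recursion twice to obtain a four-block decomposition of $\RM(\rho,\mu)$ over codes of length $2^{\mu-2}$, then use the inclusion $\RM(\rho-1,\mu-2)\subseteq\RM(\rho,\mu-2)$ together with linearity to place the six vectors in the code. The only cosmetic difference is that the paper reads off $(0|a|0|a)$, $(0|0|a|a)$, and $(a|a|a|a)$ from the block rows of the expanded generator matrix and sums them, whereas you invert the four-block parametrization directly; the underlying content is identical.
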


\begin{proof}
Let
\[
G =
\mat{
G_{\rho,\mu-1} & G_{\rho, \mu-1}\\
 0 & G_{\rho-1, \mu-1}\\
}
\]
be the generator of $\RM(\rho,\mu)$. Recursively, we obtain that
\begin{equation}\label{eq26}
G =
\mat{
G_{\rho,\mu-2} & G_{\rho, \mu-2} & G_{\rho,\mu-2} & G_{\rho, \mu-2}\\ 
0 & G_{\rho-1,\mu-2} & 0 & G_{\rho-1,\mu-2}\\
0 & 0 & G_{\rho-1, \mu-2} & G_{\rho-1, \mu-2}\\
0 & 0 & 0 & G_{\rho-2, \mu-2}
}
\end{equation}
From the second and third block rows in matrix (\ref{eq26}), we see that for any $a\in\RM(\rho-1,\mu-2)$, the second row implies $(0|a|0|a)\in\RM(\rho,\mu)$, and the third row implies $(0|0|a|a)\in\RM(\rho,\mu)$. Note that our code is linear, and thus $(0|a|0|a)+(0|0|a|a)=(0|a|a|0)\in\RM(\rho,\mu)$. Finally, note that since $\RM(\rho-1,\mu-2)\subseteq\RM(\rho,\mu-2)$, the first row implies $(a|a|a|a)\in\RM(\rho,\mu)$, and so combining this with the previous vectors, we find that $(a|a|0|0),(a|0|a|0),(a|0|0|a)\in\RM(\rho,\mu)$.
\end{proof}

We now show the batch properties of $\RM(\rho,\mu)$ for $\rho\geq1$ and $\mu\geq 2\rho + 2$.

\begin{thm}\label{batchgen}
  Let $\rho\geq1$ and $\mu\in\{2\rho + 2, 2\rho+3\}$. Then, for
  $\rho'\le\rho$, $\RM(\rho',\mu)$ is a $(n,k,4,m,\tau)$ linear batch
  code for any $m,\tau\in\N$ such that $m\tau=10\cdot2^{2\rho -2}$.
\end{thm}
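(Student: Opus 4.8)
The plan is to reduce the statement to a single hard case and then prove that case by induction on $\rho$. First, observe that for $\rho'\le\rho$ we have $\RM(\rho',\mu)\subseteq\RM(\rho,\mu)$, so by Theorem \ref{subspacethm} it suffices to prove the claim for $\rho'=\rho$. Next, the two lengths $\mu\in\{2\rho+2,2\rho+3\}$ collapse to one: applying Theorem \ref{uvthm} to $\Cs=\RM(\rho,2\rho+3)$, whose dual $\RM(\rho+2,2\rho+3)$ is the $(u\mid u+v)$-construction of $\Cs_1^\perp=\RM(\rho+2,2\rho+2)\supseteq\Cs_2^\perp=\RM(\rho+1,2\rho+2)$ with inner code $\Cs_2=\RM(\rho,2\rho+2)$, shows that a batch structure on $\RM(\rho,2\rho+2)$ lifts to one on $\RM(\rho,2\rho+3)$ with the same $m\tau$. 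Everything thus reduces to proving that $\RM(\rho,2\rho+2)$ is a $(2^{2\rho+2},k,4,m,\tau)$ batch code with $m\tau=10\cdot2^{2\rho-2}$.

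I would prove this by induction on $\rho$, the base case $\rho=1$ being exactly Theorem \ref{batch14}. For the inductive step, write $\mu=2\rho+2$ and use the recursive generator \eqref{eq26} to split a codeword $c\in\RM(\rho,\mu)$ into four blocks $c=(c^{(1)}\mid c^{(2)}\mid c^{(3)}\mid c^{(4)})$ of length $2^{2\rho}$. Each block lies in $\RM(\rho,2\rho)$, while the block structure forces every pairwise sum $c^{(i)}+c^{(j)}$, and the total sum, to lie in $D:=\RM(\rho-1,2\rho)$; by the inductive hypothesis $D$ is a $(2^{2\rho},k_0,4,m_0,1)$ batch code with $m_0=10\cdot2^{2\rho-4}$. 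As buckets for $\RM(\rho,\mu)$ I would take four disjoint copies of the buckets of $D$, one per block, producing $4m_0=10\cdot2^{2\rho-2}$ buckets and hence the desired $m\tau$.

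The reconstruction of an individual coordinate $c^{(b)}_j$ then proceeds through Lemma \ref{rowspace}: choosing $a\in D^\perp=\RM(\rho,2\rho)$ with $j\in\supp(a)$ (a recovery vector of $D$ for position $j$), the spread vectors of Lemma \ref{rowspace} are dual codewords of $\RM(\rho,\mu)$, and the one pairing block $B_b$ with a chosen partner block $B_{b'}$ expresses $c^{(b)}_j$ as the sum of the $D$-recovery set $\supp(a)\setminus\{j\}$ read in $B_b$ and the full support $\supp(a)$ read in $B_{b'}$; alternatively one may read $c^{(b)}_j$ directly. To serve a multiset of four queries I would route each query either to a direct read or to a spread using a partner block, so that every block is asked for at most a weight-$4$ collection of (coordinate, recovery-set) demands, which the inductive $t=4$ batch property of $D$ resolves disjointly inside that block's copy of the buckets.

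The main obstacle is the bookkeeping of disjointness and per-bucket load across blocks, where two points require care. First, one must show that for \emph{every} distribution of the four queries among the four blocks there is a choice of partner blocks making each block's demand an instance solvable by the batch property of $D$; the tight case is when several queries, or repeated copies of one query, fall in a single block, so that the three spread patterns leaving that block together with the direct read must all be used at once. Second, a spread recovery reads the full support $\supp(a)=\{j\}\cup(\supp(a)\setminus\{j\})$ in the partner block, so to keep the load at $1$ I must select the $D$-recovery set $\supp(a)\setminus\{j\}$ to avoid the bucket of $j$ itself; the availability of $D$ guaranteed by its batch property supplies enough freedom to make such a choice. Showing that these choices can be made simultaneously and consistently for all four queries is the crux of the argument, and once it is done the locality remark following Theorem \ref{batchg4} together with the reductions above completes the proof.
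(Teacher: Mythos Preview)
Your proposal follows the same inductive architecture as the paper: reduce to $\rho'=\rho$ via Theorem~\ref{subspacethm}, induct on $\rho$ with base case $\RM(1,4)$, quadruple the buckets of $D=\RM(\rho-1,2\rho)$, and use Lemma~\ref{rowspace} to turn $D^\perp$-vectors into dual codewords of the larger code. Your explicit reduction of the case $\mu=2\rho+3$ to $\mu=2\rho+2$ via Theorem~\ref{uvthm} is in fact cleaner than the paper's ``similar steps'' remark.

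Where the two diverge is in the bookkeeping, and here the paper's argument is simpler than the route you sketch. You propose routing queries and then invoking the $t=4$ batch property of $D$ \emph{separately inside each block}; the paper instead applies the batch property of $D$ \emph{once}, globally, to the four reduced indices $i_1',\dots,i_4'\in[n]$. That single application already yields four recovery sets $R_1',\dots,R_4'$ that are pairwise disjoint and respect the bucket constraint in $[n]$. The paper then observes that if the original queries occupy $d$ distinct quarters, at least $d$ of the $R_k'$ may be taken as singletons, and the at most $4-d$ non-singleton sets are each assigned a distinct partner from the $4-d$ \emph{unused} quarters. Because partner quarters are empty of home queries, disjointness and the one-per-bucket constraint for the spread sets $R_k=(R_k'+s_kn)\cup(R_k'+s_k'n)$ are inherited directly from the $R_k'$. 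This global-then-spread device dissolves both obstacles you flag: the ``tight'' case of all four queries in one block is simply one singleton plus three spreads into the three empty quarters, and the ``full support in the partner block'' issue is handled by the maximality assumption on singletons (if $R_k'$ puts two indices in the bucket of $i_k'$, then that bucket is otherwise unused and $R_k'$ can be replaced by $\{i_k'\}$). Your per-block formulation is not incorrect, but it makes the coordination look harder than it is; the paper's single global call to the inductive hypothesis is the cleaner way through.
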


\begin{proof}
 We focus on the case where
$\mu=2\rho+2$ as the case $\mu=2\rho+3$ proceeds with similar
steps.

If $\RM(\rho, 2\rho+2)$ is a $(n,k,4,m,\tau)$ linear batch code for any $m,\tau\in\N$ such that $m\tau=10\cdot2^{2\rho -2}$, then it follows from Theorem \ref{subspacethm} that for any $\rho'\le\rho$, the code $\RM(\rho',2\rho+2)$ is a $(n,k,4,m,\tau)$ batch code for any $m,\tau\in\N$ such that $m\tau=10\cdot2^{2\rho-2}$. Thus, we need only prove that this property holds for $\RM(\rho, 2\rho+2)$.

We proceed by induction on $\rho$. Note that in Section \ref{s432}, the claim is true for
$\rho=1$, the base cases with $\mu=4,5$. Assume that $\rho\ge1$. We
show that the properties hold for
$\RM(\rho+1,2(\rho+1)+2)=\RM(\rho+1,2\rho+4)$, assuming that
$\RM(\rho,2\rho+2)$ is a $(n,k,4,m,\tau)$ linear batch code for any
$m,\tau\in\N$ such that $m\tau=10\cdot2^{2\rho-2}$. In particular, we
may choose $\tau=1$ and have $m=10\cdot2^{2\rho-2}$ buckets.

We now examine $\RM(\rho+1,2\rho+4)$. The dual code of $\RM(\rho+1,2\rho+4)$ is $\RM(\rho+2,2\rho+4)$. By Lemma \ref{rowspace}, for any $a\in\RM(\rho+1,2\rho+2)=\RM(\rho,2\rho+2)^{\perp}$, we have
\[(a|a|0|0),(a|0|a|0),(a|0|0|a),(0|a|a|0),(0|a|0|a),(0|0|a|a)\in\RM(\rho+2,2\rho+4).\]
This provides a way to produce parity check equations for $\RM(\rho+1,2\rho+4)$ from those for $\RM(\rho,2\rho+1)$, which in turn provides a way to make recovery sets for the former from those for the latter, as each vector corresponds to a recovery set for every index in its support.

Each bucket $B=\{i_1,\dots,i_{\ell}\}$ for $\RM(\rho,2\rho+2)$  can be made into $4$ buckets for $\RM(\rho+1,2\rho+4)$: $B_1=B$, $B_2=B+n=\{i_1+n,\dots,i_{\ell}+n\}$, $B_3=B+2n$, and $B_4=B+3n$. This results in $4\cdot10\cdot2^{2\rho-2}=10\cdot2^{2\rho}=10\cdot2^{2(\rho+1)-2}$ buckets, and we must show that any set of $4$ indices may be recovered by drawing at most $1$ entry from each bucket.

Consider any tuple of $4$ indices $i_1,i_2,i_3,i_4\in[4n]=\bigcup_{s=0}^3([n]+sn)$ and let $s_k=\left\lfloor\frac{i_k-1}{n}\right\rfloor$ for $k\in[4]$. Then let $i_k'=i_k-s_kn$, so that $i_1',i_2',i_3',i_4'\in[n]$. By the induction hypothesis, we have recovery sets $R_1',R_2',R_3',R_4'\subseteq[n]$ for these indices in $\RM(\rho,2\rho+2)$. These recovery sets are sets such that $i_k'\in R_k'$ for all $k\in[4]$ and
\begin{enumerate}
\item $(R_k'\setminus\{i_k'\})\cap(R_j\setminus\{i_j'\})=\emptyset$ for $k,j\in[4]$ with $k\ne j$
\item $\bigcup_{k=1}^4(R_k'\setminus\{i_k'\})$ consists of at most $1$ index in each bucket
\end{enumerate}
Each $R_k'$ is either $\{i_k'\}$ or the support of some vector $a\in\RM(\rho+1,2\rho+2)$. If $|R_k'|=1$, then let $R_k=R_k'+s_kn$. Otherwise, let $R_k=(R_k'+s_kn)\cup(R_k'+s_k'n)$. By Lemma \ref{rowspace}, we know that if $s_k'\ne s_k$, $R_k$ is the support of some vector in $\RM(\rho+2,2\rho+4)$, and so this is a valid recovery set. We must now show that these recovery sets have the desired properties given the correct choice of $s_k'$ values.

We now note that since indices are being recreated from $d=|\{s_1,s_2,s_3,s_4\}|$ different quarters of $[4n]$, we can take at least $d$ of the recovery sets to be singletons. Further, assume that we take as many recovery sets to be singletons as possible. In particular, this means that no recovery set will contain more than one index in each bucket. This is because the only way $R_k$ could contain two indices in a bucket would be if $R_k'$ did. Since $R_k'$ is a proper recovery set, it could only contain two indices in one bucket if one of those indices was $i_k'$. Then that bucket is not used in any other recovery set, and so we could instead take $R_k'=\{i_k'\}$, as per our assumption.

This leaves at most $4-d$ recovery sets which are not singletons and require a subset in a second quarter. Assume without loss of generality that these are $R_1,\dots,R_{d-4}$. Then we may let $s_1',\dots,s_{d-4}'$ be the elements of $\{0,1,2,3\}\setminus\{s_1,s_2,s_3,s_4\}$. Since these are distinct, the only way some $R_k\setminus\{i_k\}$ and $R_j\setminus\{i_j\}$ could have a nonempty intersection would be if condition $1$ was violated. Thus, condition $1$ also holds for the $R_k$. We have already covered the fact that none of the $R_k'+s_k'n$ will not contain more than one index in each bucket, and since these are in separate quarters, the only way $\bigcup_{k=1}^4(R_k\setminus\{i_k\})$ would contain more than one index in a bucket would be if some elements being recovered in the same quarter had $(R_k'\setminus\{i_k'\})\cup(R_j'\setminus\{i_j\})$ consisting of more than 1 index in some bucket. This would violate condition $2$, and so we know that the $R_k$s also satisfy $2$.

Thus, this code is a $(4n,k',4,10\cdot2^{2(\rho+1)-2},1)$ batch code, and by Lemma \ref{lemucla2}, we know that $\RM(\rho+1,2(\rho+1)+2)$ is a $(4n,k',4,m,\tau)$ batch code for any $m,\tau\in\N$ such that $m\tau=10\cdot2^{2(\rho+1)-2}$. This completes the induction step, and so for any $\rho\ge1$, $\RM(\rho,2\rho+2)$ is a $(4n,k',4,m,\tau)$ batch code for any $m,\tau\in\N$ such that $m\tau=10\cdot2^{2\rho-2}$. 
\end{proof}

\section{Conclusion}

This work focuses on batch properties of binary Hamming and
Reed-M\"uller code. 

The high locality of binary Hamming codes implies their availability
to be at most $1$. Binary Hamming codes can be viewed as linear batch codes retrieving queries of at most $2$ indices, the trivial
case. Nonetheless, we prove that for $t=2$, binary Hamming codes
are actually optimal $(2^{s-1}, 2^{s}-1-s, 2, m, \tau)$ batch codes for
$m,\tau\in\N$ such that $m\tau = 2^{s-1}$.

We turn to binary Reed-M\"uller codes for optimal batch codes that
allow larger queries, meaning $t$-tuples with $t$ larger than
$2$. This research direction is motivated by the large availability of
first order Reed-M\"uller codes as showed in the paper. We prove the
optimality of first order Reed-M\"uller codes for $t=4$.
Finally we generalize our study to Reed-M\"uller codes $\RM(\rho,\mu)$
which have
order less than half their length by proving that they have batch
properties $(2^{\mu},k,4,m,\tau)$ such that $m\tau = 10\cdot2^{2\rho
  -2}$ for $\RM(\rho',\mu)$ where $\mu\in\{2\rho+2,2\rho+3\}$ and
$\rho'\le\rho$.




\subsection*{Acknowledgments}

\indent The authors are grateful to Clemson University for hosting the
REU at which this work was completed. The REU was made possible by an
NSF Research Training Group (RTG) grant (DMS \#1547399) promoting
Coding Theory, Cryptography, and Number Theory at Clemson.

\bibliographystyle{plain}
\bibliography{References_new}

\end{document}